\relax
\documentclass[12pt]{article}
\usepackage{natbib}
\usepackage{times}
\usepackage{helvet}
\usepackage{courier}
\usepackage{enumerate,setspace}
\usepackage{amssymb,latexsym}
\usepackage{graphicx,amsmath,amsfonts,amsthm}
\usepackage{comment}
\usepackage{tikz}
\usepackage{tkz-graph}
\usepackage{pgfplots}
\usepgfplotslibrary{fillbetween}
\usepackage{url}
\usepackage{mathrsfs}
\usepackage{paralist}
\usepackage[]{units}
\usepackage{microtype}
\usepackage{booktabs}
\usepackage[english, vlined, ruled, linesnumbered]{algorithm2e}
\usepackage[small]{caption}
\usepackage{subcaption}
\usepackage{etoolbox}
\usepackage{hyperref}
\usepackage[sort&compress,nameinlink]{cleveref}

\emergencystretch 3em%
\usepackage{etoolbox}

\oddsidemargin    0in
\evensidemargin   0in
\topmargin       -0.45in
\textwidth        6.3in
\textheight       8.7in 

\hypersetup{
colorlinks=true,citecolor=blue,linkcolor=red!60!black,
pagebackref=true
}

\newcommand{\DCF}{{\sc DCF}}
\newcommand{\DCWD}{{\sc DCWD}}
\newcommand{\DCFlong}{{\sc Diverse Committee Feasibility}}
\newcommand{\DCWDlong}{{\sc Diverse Committee Winner Determination}}
\newcommand{\BCWD}{{\sc BCWD}}


\newcommand{\ch}{\mbox{\small \rm child}}

\newcommand{\tblOpt}{\mathsf{Opt}}
\newcommand{\nbchld}{{\rm \#children}}
\newcommand{\chld}{{\rm child}}
\newcommand{\desc}{{\rm desc}}
\newcommand{\best}{{\rm best}}

\newcommand{\vertllb}{{\rm low}}
\newcommand{\verthlb}{{\rm high}}
\newcommand{\edgelb}{{\rm ed}}
\newcommand{\id}{{\rm id}}

\newcommand{\reals}{\mathbb R}

\newcommand{\naturals}{\mathbb N}

\newcommand{\calK}{\mathcal{K}}

\newcommand{\calI}{\mathcal{I}}

\newcommand{\bfw}{\boldsymbol{w}}

\newcommand{\np}{{{\mathrm{NP}}}}
\newcommand{\fpt}{{{\mathrm{FPT}}}}

\newcommand{\pos}{{{{\mathrm{pos}}}}}

\def\argmax{\mbox{argmax}}
\theoremstyle{plain}
	  \newtheorem{theorem}{Theorem}
	  
	  \newtheorem{lemma}[theorem]{Lemma}
	  \newtheorem{prop}[theorem]{Proposition}
	  \newtheorem{claim}[theorem]{Claim}
\theoremstyle{definition}
	  \newtheorem{definition}[theorem]{Definition}
	  
	  \newtheorem{example}{Example}
	  
\theoremstyle{remark}


\makeatletter

\patchcmd{\NAT@citex}
  {\@citea\NAT@hyper@{%
     \NAT@nmfmt{\NAT@nm}%
     \hyper@natlinkbreak{\NAT@aysep\NAT@spacechar}{\@citeb\@extra@b@citeb}%
     \NAT@date}}
  {\@citea\NAT@nmfmt{\NAT@nm}%
   \NAT@aysep\NAT@spacechar\NAT@hyper@{\NAT@date}}{}{}

\patchcmd{\NAT@citex}
  {\@citea\NAT@hyper@{%
     \NAT@nmfmt{\NAT@nm}%
     \hyper@natlinkbreak{\NAT@spacechar\NAT@@open\if*#1*\else#1\NAT@spacechar\fi}%
       {\@citeb\@extra@b@citeb}%
     \NAT@date}}
  {\@citea\NAT@nmfmt{\NAT@nm}%
   \NAT@spacechar\NAT@@open\if*#1*\else#1\NAT@spacechar\fi\NAT@hyper@{\NAT@date}}
  {}{}

\makeatother

\frenchspacing 
\setlength{\pdfpagewidth}{8.5in}  
\setlength{\pdfpageheight}{11in}  


  \pdfinfo{
/Title (Multiwinner Elections with Diversity Constraints)
/Author (Robert Bredereck, Piotr Faliszewski, Ayumi Igarashi, Martin  Lackner, Piotr  Skowron)
}
\setcounter{secnumdepth}{3}  

\title{Multiwinner Elections with Diversity Constraints
}
\author{
Robert Bredereck \\ University of Oxford, Oxford, UK;\\ TU Berlin, Berlin, Germany \\ robert.bredereck@tu-berlin.de \and
Piotr Faliszewski \\ AGH University, Krakow, Poland \\ faliszew@agh.edu.pl \and
Ayumi Igarashi \\ University of Oxford, Oxford, UK \\ ayumi.igarashi@cs.ox.ac.uk \and
Martin  Lackner  \\ TU Wien, Vienna, Austria \\ lackner@dbai.tuwien.ac.at  \and
Piotr  Skowron \\ TU Berlin, Berlin, Germany \\ p.k.skowron@gmail.com
}
\date{}

\begin{document}

\maketitle

\begin{abstract}
  We develop a model of multiwinner elections that combines
  performance-based measures of the quality of the committee (such as,
  e.g., Borda scores of the committee members) with diversity
  constraints. Specifically, we assume that the candidates have
  certain attributes (such as being a male or a female, being junior
  or senior, etc.) and the goal is to elect a committee that, on the
  one hand, has as high a score regarding a given performance measure,
  but that, on the other hand, meets certain requirements (e.g., of
  the form ``at least $30\%$ of the committee members are junior
  candidates and at least $40\%$ are females'').  We analyze the
  computational complexity of computing winning committees in this
  model, obtaining polynomial-time algorithms (exact and approximate)
  and $\np$-hardness results. We focus on several natural classes of
  voting rules and diversity constraints.
%
%
%
\end{abstract}

\section{Introduction}
We study the problem of computing committees (i.e., sets of
candidates) that, on the one hand, are of high quality (e.g., consist
of high-performing individuals) and that, on the other hand, are
diverse (as specified by a set of constraints).
The following example shows our problem in more concrete terms.

Consider an organization that wants to hold a research meeting on some
interdisciplinary topic such as, e.g., ``AI and Economics.'' The
meeting will take place in some secluded location and only a certain
limited number of researchers can attend. How should the organizers
choose the researchers to invite? If their main criterion were the
number of highly influential AI/economics papers that each person
published, then they would likely end up with a very homogeneous group
of highly-respected AI professors. Thus, while this criterion
definitely should be important, the organizers might put forward
additional constraints. For example, they could require that at least
30\% of the attendees are junior researchers, at least 40\% are
female, at least a few economists are invited (but only senior ones),
the majority of attendees work on AI, and the attendees come from at
least 3 continents and represent at least 10 different
countries.\footnote{For example, the Leibniz-Zentrum f\"ur Informatik
  that runs Dagstuhl Seminars gives similar suggestions to event
  organizers.}  In other words, the organizers would still seek
researchers with high numbers of strong publications, but they would
give priority to making the seminar more diverse (indeed, junior
researchers or representatives of different subareas of AI can provide
new perspectives; it is also important to understand what people
working in economics have to say, but the organizers would prefer to
learn from established researchers and not from junior ones).

The above example shows a number of key features of our
committee-selection model.
First, we assume that there is some function that evaluates the
committees (we refer to it as the \emph{objective function}). In the
example it was (implicitly) the number of high-quality papers that the
members of the committee published. In other settings (e.g., if we
were shortlisting job candidates) these could be aggregated opinions
of a group of voters (the recruitment committee, in the shortlisting
example).

Second, we assume that each prospective committee member (i.e., each
researcher in our example) has a number of attributes, which we call
labels. For example, a researcher can be \emph{junior} or
\emph{senior}, a \emph{male} or a \emph{female}, can \emph{work in AI}
or in \emph{economics} or in some other area, etc.  Further, the way
in which labels are assigned to the candidates may have a structure on
its own. For example, each researcher is either male or female and
either junior or senior, but otherwise these attributes are
independent (i.e., any combination of gender and seniority level is
possible). Other labels may be interdependent and may form
hierarchical structures (e.g., every researcher based in Germany is
also labeled as representing Europe).  Yet other labels may be
completely unstructured; e.g., researchers can specialize in many
subareas of AI, irrespective how (un)related they seem.

Third, we assume that there is a formalism that specifies when a
committee is \emph{diverse}. In principle, this formalism could be
any function that takes a committee and gives an
\emph{accept/reject} answer. However, in many typical settings it
suffices to consider simple constraints that regard each label
separately (e.g., ``at least 30\% of the researchers are junior'' or
``the number of male researchers is even''). We focus on such
independent constraints, but studying more involved ones, that regard
multiple labels (e.g., ``all invited economists must be senior
researchers'') would also be interesting.

Our goal is to find a committee of a given size $k$ that is diverse
and has the highest possible score from the objective function.
While similar problems have already been considered (see the Related
Work section), we believe that our paper is the first to
systematically study the problem of selecting a diverse committee,
where diversity is evaluated with respect to candidate attributes.  We
provide the following main contributions:
\begin{enumerate}
\item We formally define the general problem of selecting a diverse
  committee and we provide its natural restrictions. Specifically, we
  focus on the case of submodular objective functions (with the
  special case of separable functions), candidate labels that are
  either layered or laminar,\footnote{If we restricted our example to
    labels regarding gender and seniority level, we would have
    2-layered labels (because there are two sets of labels,
    $\{\mathit{male, female}\}$ and $\{\mathit{junior, senior}\}$, and
    each candidate has one label from each set. On the other hand,
    hierarchical labels, such as those regarding countries and
    continents, are 1-laminar (see description of the model for more
    details).} and constraints that specify sets of acceptable
  cardinalities for each label independently (with the special case of
  specifying intervals of acceptable values).

\item We study the complexity of finding a diverse committee of a
  given size, depending on the type of the objective function, the
  type of the label structure, and the type of diversity constraints.
  While in most cases we find our problems to be $\np$-hard (even if
  we only want to check if a committee meeting diversity constraints
  exists; without optimizing the objective function), we also find
  practically relevant cases with polynomial-time algorithms
  (e.g., our algorithms would suffice for the research-meeting example
  restricted to the constraints regarding the seniority level and
  gender).
  We provide approximation algorithms for some of our $\np$-hard problems.

\item We study the complexity of recognizing various types of label
  structures. For example, given a set of labeled candidates, we ask
  if their labels have laminar or layered structure. It turns out that
  recognizing structures with three independent sets of labels is
  $\np$-hard, whereas recognizing up to two independent sets is
  polynomial-time computable.
  
\item Finally, we introduce the concept of \emph{price of diversity},
  which quantifies the ``cost'' of introducing diversity constraints
  subject to the assumed objective function.
  
\end{enumerate}

Our main results are presented in \Cref{table}.


\section{The Model}

For $i,j\in\naturals$, we write $[i,j]$ to denote the set
$\{i,i+1,\dots,j\}$.  We write $[i]$ as an abbreviation for $[1,i]$.
For a set $X$, we write $2^X$ to denote the family of all of its
subsets.  We first present our model in full generality and then
describe the particular instantiations that we focus on in our
analysis.

\paragraph{General Model}

Let $C = \{c_1, \ldots, c_m\}$ be a set of candidates and let $L$ be a
set of labels (such as \emph{junior}, \emph{senior}, etc.).  Each
candidate is associated with a subset of these labels through a
labeling function $\lambda \colon C \rightarrow 2^L$.  We say that a
candidate~$c$ has label~$\ell$ if~$\ell \in \lambda(c)$, and we write
$C_\ell$ to denote the set of all candidates that have label $\ell$.

A diversity specification is a function that given a committee (i.e.,
a set of candidates), the set of labels, and the labeling function
provides a \emph{yes/no} answer specifying if the committee is
\emph{diverse}.  If a committee is diverse with respect to diversity
specification $D$, then we say that it is \emph{$D$-diverse}.


An objective function $f \colon 2^C \rightarrow \reals$ is a function
that associates each committee with a score. We assume that
$f(\emptyset) = 0$ and that the function is monotone (i.e., for each
two committees $A$ and $B$ such that $A \subseteq B$, it holds that
$f(A) \leq f(B)$). In other words, an empty committee has no value and
extending a committee cannot hurt it.

Our goal is to find a committee of a given size $k$ that meets the
diversity specification and that has the highest possible score according
to the objective function.

\begin{definition}[\DCWDlong{} (\DCWD{})]
  Given a set of candidates $C$, a set of labels $L$, a labeling
  function $\lambda$, a diversity specification $D$, a desired
  committee size $k$, and an objective function $f$, find a committee
  $W \subseteq C$ with $|W|=k$ that achieves the maximum value $f(W)$
  among all $D$-diverse size-$k$ committees.
\end{definition}

The set of candidates, the set of labels, and the labeling function
are specified explicitly (i.e., by listing all the candidates with all
their labels). The encoding of the diversity specification and the
objective function depends on a particular case (see discussions
below). To consider the problem's $\np$-hardness, we take its
decision variant, where instead of asking for a $D$-diverse committee
with the highest possible value of the objective function we ask if
there exists a $D$-diverse committee with objective value at least $T$
(where the threshold $T$ is a part of the input).

We also consider the \DCFlong{} (\DCF{})
problem, which takes the same input as the winner determination
problem, but where we ask if any $D$-diverse committee of size $k$
exists, irrespective of its objective value. In other words, the
feasibility problem is a special case of the decision variant of the
winner determination problem, where we ask about a $D$-diverse
committee with objective value greater or equal to $0$.  Thus, if the
feasibility problem is $\np$-hard, then the analogous winner
determination problem is $\np$-hard as well (and if the winner
determination problem is polynomial-time computable, so is the
feasibility problem).

The model, as specified above, is far to general to obtain any sort of
meaningful computational results. Below we specify its restrictions
that we study.

\paragraph{Objective Functions}
\label{subsec:satisfaction-f}

An objective
function is submodular if for each two committees $S$ and $S'$ such
that $S \subseteq S' \subseteq C$ and each $c \in C \setminus S'$ it
holds that $f(S \cup \{c\}) - f(S) \geq f(S' \cup \{c\}) -
f(S')$. 
For two sets of candidates $X$ and $S$, we write $f(X|S)$ to denote
the marginal contribution of the candidates from $X$ with respect to
those in $S$. Formally, we have $f(X|S)=f(S\cup X)-f(S)$.  
Submodular functions are very common and suffice to express many
natural problems. We assume all our objective functions to be
submodular.

\begin{example}\label{ex:cc}
  Consider the following voting scenario.  Let
  $C= \{c_1, \ldots, c_m\}$ be  a set of candidates and $V = \{v_1, \ldots,
  v_n\}$ a set of voters, where each voter ranks all the candidates from best to
  worst. We write $\pos_{v_i}(c)$ to denote the position of candidate
  $c$ in the ranking of voter $v_i$ (the best candidate is ranked on
  position $1$, the next one on position $2$, and so on).
  The Borda score associated with position $i$ (among $m$ possible
  ones) is  $\beta_m(i) = m-i$.  Under the
  Chamberlin--Courant rule (CC), the score of a committee $S$ is
  defined by objective function $ f^{\mathrm{CC}}(S) = \sum_{i=1}^n
  \beta_m(\min\{\pos_{v_i}(c) \mid c \in S\}).  $ Intuitively, this
  function associates each voter with her representative (the member
  of the committee that the voter ranks highest) and defines the score
  of the committee as the sum of the Borda scores of the voters'
  representatives. It is well-known that this function is
  submodular~\citep{budgetSocialChoice}.
  The CC rule outputs those committees (of a given size $k$) for which
  the CC objective function gives the highest value (and, intuitively,
  where each voter is represented by a committee member that the voter
  ranks highly).
\end{example}

As a special case of submodular functions, we also consider
\emph{separable} functions. A function is separable if for every
candidate $c \in C$ there is a weight $w_c$ such that the value of a
committee $S$ is given as $f(S) = \sum_{c \in S}w_c$. While separable
functions are very restrictive, they are also very natural.

\begin{example}\label{ex:k-borda}
  Consider the setting from \Cref{ex:cc}, but with objective
  function $f^{\mathrm{kB}(W)} = \sum_{i=1}^n \left( \sum_{c \in
        W}\beta_m(\pos_{v_i}(c) \right)$. This function sums Borda
    scores of all the committee members from all the voters and models
    the $k$-Borda voting rule (the committee with the highest score is
    selected).  The function is separable as for each candidate $c$ it
    suffices to take $w_c = f^{\mathrm{k\hbox{-}Borda}}(\{c\})$.  It
    is often argued that $k$-Borda is a good rule when our goal is to
    shortlist a set of individually excellent
    candidates~\citep{FSST-trends}.
\end{example}

Together, \Cref{ex:cc} and \Cref{ex:k-borda} show that our
model suffices to capture many well-known multiwinner voting
scenarios. Many other voting rules, such as Proportional Approval
Voting, or many committee scoring rules, can be
expressed through submodular objective
functions~\citep{sko-fal-lan:c:collective,fal-sko-sli-tal:c:classification}.

\paragraph{Diversity Specifications}
We focus on diversity specifications that regard each label
independently.  In other words, the answer to the question if a given
committee $S$ is diverse or not depends only on the cardinalities of the
sets $C_\ell \cap S$.

\begin{definition}
  For a set of candidates $C$, a set of labels $L$, and a labeling
  function $\lambda$, we say that a diversity specification $D$ is
  independent (consists of independent constraints) if and only if
  there is a function $b \colon L \rightarrow 2^{[|C|]}$ (referred to
  as the cardinality constraint function) such that a committee $S$ is
  diverse exactly if for each label $\ell$ it holds that $|S \cap
  C_\ell| \in b(\ell)$.
\end{definition}

If we have $m$ candidates then specifying independent constraints
requires providing at most $m+1$ numbers for each label.  Thus
independent constraints can easily be encoded in the inputs for our
algorithms.

Independent constraints are quite expressive. For example, they are
sufficient to express conditions such as ``the committee must contain
an even number of junior researchers'' or, since our committees are of
a given fixed size, conditions of the form ``the committee must
contain at least $40\%$ females.'' Indeed, the conditions of the
latter form are so important that we consider them separately.

\begin{definition}
  For a set of candidates~$C$, a set of labels~$L$, and a labeling
  function~$\lambda$, we say that a diversity specification~$D$ is
  interval-based (consists of interval constraints) if and only if
  there are functions $b_1, b_2 \colon L \rightarrow 2^{[|C|]}$
  (referred to as the lower and upper interval constraint functions)
  such that a committee $S$ is diverse if and only if for each label
  $\ell$ it holds that $b_1(\ell) \leq |S \cap C_\ell| \leq
  b_2(\ell)$.
\end{definition}

\paragraph{Label Structures}

In principle, our model allows each candidate to have an arbitrary set
of labels. In practice, there usually are some dependencies between
the labels and these dependencies can have strong impact in the
complexity of our problem. We focus on labels that are arranged in
independent, possibly hierarchically structured, layers.

Let $C$ be a set of candidates, let $L$ be a set of labels, and let
$\lambda$ be a labeling function.  We say that $\lambda$ has
\emph{$1$-layered} structure (i.e., we have a $1$-layered labeling) if
for each two distinct labels $x,y$ it holds that $C_x \cap
C_y=\emptyset$ (i.e., each candidate has at most one of these labels).
For example, if we restricted the example from the introduction to
labels regarding the seniority level (junior or senior), then we would
have a $1$-layered labeling.

More generally, we say that a labeling is \emph{$1$-laminar} if for
each two distinct labels $x,y$ we have that either (a) $C_x \cap
C_y=\emptyset$ or (b) $C_x \subseteq C_y$ or (c) $C_y \subseteq C_x$.
In other words, $1$-laminar labellings allow the labels to be arranged
hierarchically.

\begin{example}
  Consider a set $C=\{a,b,c,d,e\}$ of five candidates and labels that
  encode the countries and continents where the candidates come
  from. Specifically, there are four countries $r_1,r_2,r_3,r_4$, and
  two continents $R_1$ and $R_2$. The candidates are labeled as
  follows:
  \begin{align*}
    &\lambda(a)=\{r_1,R_1\}, \quad \lambda(b)=\{r_1,R_1\}, \quad \lambda(c)=\{r_2,R_1\}, \\
    &\lambda(d)=\{r_3,R_2\}, \quad \lambda(e)=\{r_4,R_2\}.
  \end{align*}
  \Cref{fig1} illustrates the $1$-laminar inclusion-wise relations between the
  labels (there can be more levels of the hierarchy; for example, for each country there could be
  labels specifying local administrative division).
\end{example}

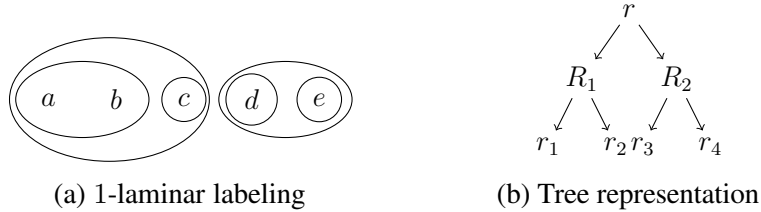
\begin{figure}
\centering
\begin{subfigure}[t]{0.37\textwidth}
\centering
\begin{tikzpicture}[scale=0.9, transform shape]
	\draw (0.5,0) ellipse (28pt and 16pt);
	\draw (0.9,0) ellipse (42pt and 26pt);
	\node at (0,0) {$a$};
	\node at (1,0) {$b$};
	\node[draw,circle] at (2,0) {$c$};
	
	\draw (3.5,0) ellipse (28pt and 16pt);
	\node[draw,circle] at (3,0) {$d$};
	\node[draw,circle] at (4,0) {$e$};
\end{tikzpicture}
	\caption{$1$-laminar labeling}
	\label{fig1}
\end{subfigure}\qquad
\begin{subfigure}[t]{0.27\textwidth}
\centering
\begin{tikzpicture}[scale=0.9,transform shape]
		\tikzstyle{every node}=[inner sep=3pt] 
		\node(0) at (0,0) {$r$};
		\node(1) at (-0.7,-1) {$R_1$};
		\node(2) at (0.7,-1) {$R_2$};
		\node(3) at (-1.2,-2) {$r_1$};
		\node(4) at (-0.2,-2) {$r_2$};
		\node(5) at (0.2,-2) {$r_3$};
		\node(6) at (1.2,-2) {$r_4$};
		
		\draw[->] (0)--(1); 
		\draw[->] (0)--(2); 
		\draw[->] (1)--(3);
		\draw[->] (1)--(4); 
		\draw[->] (2)--(5);
		\draw[->] (2)--(6); 
\end{tikzpicture}
\caption{Tree representation}
\label{fig2}
\end{subfigure}
\caption{Illustration of a $1$-laminar labeling structure.}
\end{figure}

Every $1$-laminar labeling, together with the set of candidates, can
be represented as a rooted tree $T$ in the following way: For a pair
of distinct labels $x,y$ we create an arc from $x$ to $y$ if $C_x
\subsetneq C_y$ and there is no label $z$ such that $C_x \subsetneq
C_z \subsetneq C_y$. We add a \emph{root label} $r$ and we impose that
each candidate has this label; we add an arc from $r$ to each label
without an incoming arc. The resulting digraph $T$ is clearly a rooted
tree. See \Cref{fig2} for an illustration.

For each positive integer $t$, we say that a labeling is
\emph{$t$-layered} (respectively, \emph{$t$-laminar}) if the set $L$
of labels can be partitioned into sets $L_1, L_2, \ldots, L_t$ such
that for each $i \in [t]$, the labeling restricted to the labels from
$L_i$ is $1$-layered (respective, $1$-laminar). 

\begin{example}
  In the example from the introduction, restricting our attention to
  candidates' gender and seniority levels, we get a $2$-layered
  labeling structure. If we also consider labels regarding countries
  and continents, then we get a $3$-laminar structure (however, only
  the geographic labels would be using the full power of laminar
  labellings).
\end{example}

We assume that when we are given a $t$-layered ($t$-laminar) labeling
structure, we are also given the partition of the set of labels that
defines this structure (in \Cref{sec:structure-recognition} we
analyze the problem of recognizing such structures algorithmically).

\paragraph{Balanced Committee Model}

As a very natural special case of our model we considered the problem
of computing balanced committees. In this case there are only two
labels (e.g., \emph{male} and \emph{female}), each candidate has
exactly one label, and the constraint specification is that we need to
select exactly the same number of candidates with either label (thus,
by definition, the committee must be of an even size).

Computing balanced committees is a very natural problem. For example,
seeking gender balance is a common requirement in many settings. In
this paper, we seek exact balance (that is, we seek exactly the same
number of candidates with either label) but allowing any other
proportion would lead to similar results.

\section{Separable Objective Functions}\label{sec:separable}

Separable objective functions form a simple, but very important
special case of our setting. Indeed, such functions are very natural
in shortlisting examples, where diversity constraints are used to
implement, e.g., affirmative actions or employment-equity laws.  We
organize our discussion with respect to the type of constraint
specifications.

\paragraph{Independent Constraints}

It turns out that independent constraints are quite difficult to work
with.  If the labels are $1$-laminar then polynomial-time algorithms
exist (both for deciding if feasible committees exist and for
computing optimal ones), but with $2$-layered labellings our problems
become $\np$-hard (recall that $t$-layered labellings are a special
case of $t$-laminar ones).
Our polynomial-time algorithms proceed via dynamic programming and hardness
proofs use reductions from \textsc{Exact 3 Set Cover (X3C)}.

\begin{theorem}\label{thm:poly-independent-1laminar}
Let $D$ be a diversity specification of independent constraints.
Suppose that $\lambda$~is $1$-laminar and $f$~is separable.
Then, \DCWD{} can be solved in $O(|L|^2k^2 + |C|\log|C|))$ time.
Moreover, \DCF{} can be solved in $O(|L|^2k^2)$ time. 
If the $\lambda$ function is $2$-layered then both problems are
$\np$-hard (even if each candidate has at most two labels, and
 each label is associated to at most three candidates).
\end{theorem}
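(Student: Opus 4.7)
The plan is to prove the two halves separately: a tree dynamic program on the laminar hierarchy for the positive part, and a reduction from \textsc{Exact 3-Cover} (X3C) for the hardness.

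For the algorithm, I work with the rooted tree $T$ from the preliminaries. Since the labels containing any fixed candidate $c$ form a chain in $T$, each candidate has a unique deepest label $v_c$; write $\mathrm{placed}(v)$ for the candidates with $v_c=v$. Then $C_v$ is the disjoint union of $\mathrm{placed}(u)$ over all descendants $u$ of $v$ (including $v$ itself). I define
\[
\mathrm{OPT}[v][j]=\max\bigl\{f(S):S\subseteq C_v,\ |S|=j,\ |S\cap C_u|\in b(u)\text{ for every descendant }u\text{ of }v\bigr\},
\]
setting this to $-\infty$ when no such $S$ exists. Computing $\mathrm{OPT}[r][k]$ at the root $r$ solves \DCWD{}, while its finiteness decides \DCF{}. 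I would presort $\mathrm{placed}(v)$ by $w_c$ at every node, costing $O(|C|\log|C|)$ overall; monotonicity of $f$ together with $f(\emptyset)=0$ forces $w_c\geq0$, so top-$j$ selection is correct. At a leaf $v$, $\mathrm{OPT}[v][j]$ is the sum of the top $j$ weights in $\mathrm{placed}(v)$ when $j\in b(v)$ and $-\infty$ otherwise. At an internal node $v$ with children $u_1,\dots,u_d$, I choose a count $j_0$ of locally placed candidates together with an allocation $(j_1,\dots,j_d)$ across the children, requiring $j_0+\sum_i j_i = j\in b(v)$; this is evaluated by the standard size-$O(k^2)$ convolution that merges one child at a time. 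Each child contributes $O(k^2)$ work, yielding the claimed $O(|L|^2k^2+|C|\log|C|)$ bound (the $|L|^2$ factor absorbs the per-node work for $b(v)$-membership across the $O(|L|)$ nodes). For \DCF{} I drop weights and track a single feasibility bit, which also removes the sort, giving $O(|L|^2k^2)$.

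For the hardness, I would reduce from X3C restricted so that every element occurs in at most three sets, which is still NP-hard. Given a universe $X=\{x_1,\dots,x_{3q}\}$ and three-element sets $S_1,\dots,S_m$, I introduce element labels $\ell_1,\dots,\ell_{3q}$ as one layer and set labels $m_1,\dots,m_m$ as the other. For each pair $(S_j,x)$ with $x\in S_j$ I create a candidate whose labels are exactly $\ell_x$ and $m_j$. Each candidate thus carries exactly two labels (one per layer); each $m_j$ is shared by exactly three candidates; and the at-most-three-sets assumption gives $|C_{\ell_i}|\leq 3$. I set $k=3q$, $b(\ell_i)=\{1\}$ for every element label, and $b(m_j)=\{0,3\}$ for every set label (legitimate since the theorem concerns \emph{independent} constraints). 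A feasible committee then selects all three candidates of $m_j$ or none of them, while covering each element exactly once, which is precisely an exact cover; conversely every exact cover induces such a committee. Thus \DCF{} is NP-hard, and hence so is \DCWD{}.

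The main obstacle I expect is the soundness of the DP at internal nodes that themselves host candidates: I must ensure that the constraints for \emph{every} label in the subtree of $v$ are discharged entirely within the subproblem at $v$, so that merging subtrees at a parent never re-opens a child's constraint. The $j\in b(v)$ guard in the transition is exactly what closes this argument, since the parent only chooses the aggregate $|S\cap C_v|$ and the previously imposed constraints inside the subtree are already baked into the values $\mathrm{OPT}[u_i][\,\cdot\,]$.
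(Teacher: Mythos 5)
Your proposal is correct and follows essentially the same route as the paper: a bottom-up dynamic program over the laminar tree for the positive part, and the identical reduction from \textsc{Exact Cover by 3-Sets} (element labels with $b=\{1\}$, set labels with $b=\{0,3\}$, one candidate per set--element incidence) for the hardness. The only differences are cosmetic --- you handle candidates sitting at internal nodes via their deepest label rather than the paper's artificial leaf labels, and you explicitly invoke the bounded-occurrence variant of X3C, which the theorem's parenthetical ``each label is associated to at most three candidates'' in fact requires.
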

\begin{proof}
  We first consider the case where $\lambda$ is $1$-laminar and we give
  a polynomial-time algorithm.

  Let $b \colon L \rightarrow 2^{[|C|]}$ be the cardinality constraint
  function corresponding to diversity specification~$D$ of the input.
  Let $T$ be a rooted tree representation for $L$; we denote by $r$
  the root label that corresponds to the size $k$ constraint on the
  whole committee size, i.e., $C_r=C$ and $b(r)=\{k\}$.
  Additionally, we add for every non-leaf label~$q$ in the tree
  representation of the labeling structure (including the possibly
  newly created~$r$) an artificial label~$q^*$ with~$b(q^*)=[0,|C|]$ and
  add this label to every candidate that has label~$q$ but none of the
  (original) child-labels of~$q$.
  This step clearly does no influence the solvability of our problem
  but ensures that every candidate has at least one label that is a
  leaf node in the tree representation of the labeling structure.

  For each $\ell \in L$, we denote by $\chld(\ell,i)$ the $i$th child
  of~$\ell \in L$ in~$T$, and by $\nbchld(\ell)$ number of children
  of~$\ell \in L$ in~$T$.
  By $\desc(\ell)$ we denote the set of all descendants of~$\ell$ (including $\ell$ itself, i.e., $\ell \in \desc(\ell)$).
  Furthermore, let~$\best(\ell,j)$ be the candidate from~$C_{\ell}$
  with the $j$th largest value according to~$f$.
  For technical reasons, we introduce the $\bot$~symbol as
  placeholder for a non-existing (sub)committee and define
  $X \cup \bot := \bot$ for any set~$X$.
  We set~$f(\emptyset):=0$ and~$f(\bot):=-\infty$.
  
  We describe a dynamic programming algorithm that solves \DCWD{}
  using the integer table~$\tblOpt$ where $\tblOpt[\ell,w,i]$
  contains a (sub)committee~$W$ with maximum total score~$f(W)$
  among all committees that consist of $|W|=w$~candidates with labels
  from~$\{\chld(\ell,1),\dots,\chld(\ell,i)\}$ such that $|W \cap
  C_{\ell'}| \in b(\ell')$ for all $\ell' \in \desc(\chld(\ell,j))$ and $j =1,2,\ldots,i$.

  It is not hard to verify that the overall solution for the \DCWD{}
  instance can be read from the table as~$\tblOpt[r,k,\nbchld(r)]$.

  We will now show how to compute the table~$\tblOpt$ in a bottom-up
  manner.  For each leaf label-node~$\ell$ we
  set~$\tblOpt[\ell,w,0]:=\{\best(\ell,j) \mid j \le w\}$ if $w \in
  b(\ell)$, that is, $\tblOpt[\ell,w,0]$ is the set of the
  $w$~``best'' candidates with label~$\ell$ and, otherwise, we
  set~$\tblOpt[\ell,w,0]:=\bot$.
  For each inner label-node~$\ell$, 
  we set~$\tblOpt[\ell,w,1]$ to $\tblOpt[\chld(\ell,i),x^*,\nbchld(\chld(\ell,i))]$
  where \[x^* := \argmax_{x \in [w]} f(\tblOpt[\chld(\ell,i),x,\nbchld(\chld(\ell,i))])\]
  if $w \in b(\ell)$ and, otherwise, we set~$\tblOpt[\ell,w,1]:=\bot$.
  Further, for each inner label-node~$\ell$ and $i>1$, 
  we set~$\tblOpt[\ell,w,i]$ to $\tblOpt[\ell,w-x^*,i-1]$ $\cup$
  $\tblOpt[\chld(\ell,i),x^*,\nbchld(\chld(\ell,i))]$
  where $x^* := \argmax_{x \in [w]} f(\tblOpt[\ell,w-x,i-1] \cup
  \tblOpt[\chld(\ell,i),x,\nbchld(\chld(\ell,i))])$ if $w \in b(\ell)$
  and, otherwise, we set~$\tblOpt[\ell,w,i]:=\bot$.

  As for the running time, sorting the candidates with respect to
  their value according to~$f$ takes $O(|C| \cdot \log |C|)$ time.
  The table is of size~$O(|L|^2 k)$ and computing a single table
  entries takes at most~$O(k)$ time.  The overall running time is
  $O(|L|^2k^2 + |C|\log|C|))$ which is polynomial since~$k\le |C|$.

  For \DCF{}, we can skip to sort candidates which leads to the
  improved running time $O(|L|^2k^2)$.\bigskip

  Let us now consider the second part of the theorem.  We use a
  reduction from the $\np$-hard \textsc{Exact Cover by $3$-Sets}
  which, given a finite set~$X$ and a collection~$\mathcal{S}$ of
  size-$3$ subsets of~$X$, asks whether there is a subcollection
  $\mathcal{S}' \subseteq \mathcal{S}$ that partitions $X$, that is,
  each element~of $X$ is contained in exactly one subset
  from~$\mathcal{S}'$.  The reduction is similar to the reduction of
  the somewhat closely related {\sc General Factor}
  problem~\citep{Cor88} and works as follows: Create one \emph{element
    label}~$x$ for each element~$x \in X$ and one \emph{set label}~$S$
  for each subset~$S \in \mathcal{S}$.  We set $b(S):=\{0,3\}$ for
  each $S \in \mathcal{S}$ and $b(x):=\{1\}$ for each $x \in X$.  For
  each subset~$S=\{x,x',x''\} \in \mathcal{S}$, create three
  candidates $c(S,x)$, $c(S,x')$, and $c(S,x'')$ labeled with
  $\{S,x\}$, $\{S,x'\}$, and $\{S,x''\}$, respectively.  Finally, set
  the committee size~$k:=|X|$.  This completes the construction which
  can clearly be performed in polynomial time.  For the correctness,
  assume that there is a subcollection $\mathcal{S}' \subseteq
  \mathcal{S}$ that partitions~$X$.  It is easy to verify that
  $\{c(S,x^*) \mid S \in \mathcal{S}', x^* \in S\}$ is a $D$-diverse
  committee.  Furthermore, let $C^* \subseteq C$ be an arbitrary
  $D$-diverse committee.  Now, ${\cal S}'=\{S \in {\cal S} \mid c(S,x) \in C^* \text{ for
    some $x \in S$}\}$ partitions~$X$: each element~$x \in X$ is
  covered exactly once since $b(x)=\{1\}$ for all $x \in X$, and ${\cal S}'$ is pairwise disjoint since
  $b(S)=\{0,3\}$ for all $S \in \mathcal{S}$.
\end{proof}

Given the above hardness results, it is immediate to ask about the
parametrized complexity of our problems because in many settings the
label structures are very limited (for example, the $2$-layered
gender/seniority labeling from the introduction contains only $4$
labels and already is very relevant for practical applications).
Unfortunately, for independent constraints our problems remain hard
when parametrized by the number of labels.

\begin{theorem}\label{thm:W1h_labels-independent}
  Both \DCF{} and \DCWD{} problems are $W[1]$-hard with respect to the
  number of labels $|L|$, even if $D$ is a diversity specification of
  independent constraints.
\end{theorem}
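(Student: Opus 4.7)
The plan is to reduce from \textsc{Multi-Colored Clique} parameterized by the clique size~$k$, which is well-known to be $W[1]$-hard, in such a way that the resulting \DCF{} instance uses only $f(k)$~many labels. Given an input graph $G=(V,E)$ with color classes $V_1,\dots,V_k$, I would create one candidate $c_e$ for each edge $e\in E$ and set the committee size to $k':=\binom{k}{2}$. The ``easy'' part of the construction is to introduce $\binom{k}{2}$ pair labels $P_{ij}$, one per pair $\{i,j\}$ of color classes, with $P_{ij}$ attached to exactly those edge candidates spanning $V_i$ and $V_j$ and with $b(P_{ij})=\{1\}$; these constraints alone force that exactly one edge is chosen between every pair of color classes.

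The delicate part is enforcing \emph{endpoint consistency}: for each color~$i$, the $k-1$ selected edges incident to $V_i$ must share a common $V_i$-endpoint. To encode this with only $O(k)$ further labels, I would exploit the critical feature of independent constraints, namely that $b(\ell)$ may be an \emph{arbitrary}, non-interval subset of $[0,|C|]$. Numbering $V_i=\{v_{i,1},\dots,v_{i,n_i}\}$, I pick Sidon-type weights $w_{i,a}$ satisfying the property that $\sum_{t=1}^{k-1}w_{i,a_t}=(k-1)\,w_{i,a}$ forces $a_1=\cdots=a_{k-1}=a$ (powers of a large enough base suffice). For each edge candidate with $V_i$-endpoint $v_{i,a}$, I attach $w_{i,a}$ auxiliary ``dummy'' candidates that carry a single color label~$L_i$, and I declare $b(L_i):=\{(k-1)\,w_{i,a}:a\in[n_i]\}$. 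A committee then satisfies every $L_i$-constraint iff the $k-1$ selected color-$i$ edges share a common $V_i$-endpoint, hence iff the chosen edges form a multicolored $k$-clique in $G$; $W[1]$-hardness of \DCWD{} follows because \DCF{} reduces trivially to \DCWD{}, as noted just after the problem definitions.

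The main obstacle is that the auxiliary dummy candidates must be selected in \emph{lock-step} with their parent edge candidate~$c_e$; implementing this naively would require one extra ``all-or-none'' label per edge, blowing the label count far past $f(k)$. I would sidestep this by bundling all dummies of edges in the same color-pair $(i,j)$ under the pair label $P_{ij}$ itself, and by replacing $b(P_{ij})=\{1\}$ with a carefully chosen non-convex set whose elements enumerate $1+w_{i,a}+w_{j,b}$ over admissible endpoint pairs~$(a,b)$, once again exploiting that independent constraints allow arbitrary subsets. Verifying that this set is polynomial-time constructible and admits no parasitic committees (i.e.\ that the only way to match one of its values is by consistently picking $c_e$ together with its $w_{i,a}+w_{j,b}$ dummies) is the technical heart of the argument; once this is checked, the total label count is $O(k^2)$, giving the desired $W[1]$-hardness in $|L|$.
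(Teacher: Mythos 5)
Your overall strategy coincides with the paper's: both reduce from \textsc{Multicolored Clique}, keep the number of labels bounded by a function of the clique size $k$, and exploit the defining feature of independent constraints --- that $b(\ell)$ may be an arbitrary subset of $[0,|C|]$ --- so that vertex identities are encoded arithmetically in the cardinalities $|S\cap C_\ell|$ and adjacency is certified by membership of a weighted sum in a non-convex constraint set. In that sense the proposal is on the right track and would not count as a genuinely different route.

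However, the step you yourself flag as ``the technical heart'' is a genuine gap, and the fix you sketch does not close it. Nothing in your construction ties the selection of the $w_{i,a}$ dummies to the selection of their parent edge candidate $c_e$. Folding the dummies into the count of $P_{ij}$ does not help, because $|S\cap C_{P_{ij}}|$ aggregates three populations --- edge candidates, $L_i$-dummies and $L_j$-dummies --- and a target value $1+w_{i,a}+w_{j,b}$ can be hit by many inconsistent mixtures: zero or two edge candidates compensated by dummies; $w_{i,a}+5$ dummies of one color together with $w_{j,b}-5$ of the other; or the candidate $c_{e'}$ for some $e'\neq(v_{i,a},v_{j,b})$ paired with the dummies of a different edge. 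Consequently the hypothesis of your Sidon argument (that the color-$i$ dummy count decomposes as a sum of $k-1$ \emph{individual} weights $w_{i,a_t}$) is not guaranteed, and the backward direction of the reduction fails. There is also a bookkeeping inconsistency: you set the committee size to $\binom{k}{2}$ yet require the dummies to be selected. The paper sidesteps the lock-step problem entirely by having \emph{no} per-edge or per-vertex candidates: for each color it introduces pools of mutually interchangeable ``lower'' and ``higher'' selection candidates (plus padding dummies), so the only degree of freedom is \emph{how many} candidates of each pool to take; the label $\vertllb_i$ forces that number to be $(q+1)x$ for a vertex index $x$, and each edge label then sees exactly the combination $(q+1)x+y$, whose membership in $b(\edgelb_{i,j})$ encodes adjacency. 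Your plan could likely be repaired by moving in that direction (for instance, adding for each pair and each of its two colors a label pinning the per-pair dummy count to the set $\{w_{i,a}\mid a\in[n_i]\}$, and letting the arithmetic alone, rather than the identity of the chosen edge candidate, certify adjacency), but as written the consistency argument is missing.
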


\begin{proof}
We describe a parametrized reduction from the $W[1]$-hard {\sc Multicolored Clique} ({\sc MCC}) problem which,
given an undirected graph~$G=(V,E)$, a non-negative integer~$h\in \naturals$,
and a vertex coloring~$\phi \colon V\to \{1, 2, \ldots, h\}$, asks
whether graph~$G$ admits a colorful $h$-clique, that is, a size-$h$ vertex
subset~$H\subseteq V$ such that the vertices in~$H$ are pairwise adjacent
and have pairwise distinct colors.
Without loss of generality, we assume that the number of vertices from each color class
equals some integer~$q \le |V|$.
Let $(G=(V,E),\phi)$ be an {\sc MCC} instance.
We denote the set of vertices of color~$i$ as $V(i) = \{v_1^{i}, \ldots, v_q^{i}\}$.
We construct a \DCF{} instance as follows.

\emph{Labels.} \quad
For each color~$i \in [h]$ we have a \emph{lower vertex label}~$\vertllb_i$ and
a higher \emph{higher vertex label}~$\verthlb_i$.
For each (unordered) color pair~$i,j \in [h], i \neq j$ we have an \emph{edge label}~$\edgelb_{i,j}$.
(So that $|L|=2h+h(h-1)/2$ is obviously upper-bounded by some function in~$h$.)

\emph{Candidates and Labeling.} \quad
For each color~$i \in [h]$ and each vertex~$v \in V(i)$ we introduce $q(q+1)$~\emph{lower color-$i$-selection candidates} and
$q$~\emph{higher color-$i$-selection candidates}.
The labeling function~$\lambda$ is defined as follows.
For each lower color-$i$-selection candidate~$c$ we have
$\lambda(c):=\{\vertllb_i, \verthlb_i\} \cup \{\edgelb_{i,j} \mid j < i\}$.
For each higher color-$i$-selection candidate~$c$ we have
$\lambda(c):=\{\verthlb_i\} \cup \{\edgelb_{i,j} \mid j > i\}$.
Introduce further $h(q+2)^2$~\emph{dummy candidates}~$u$ with $\lambda(u):=\emptyset$.

\emph{Diversity Constraints.} \quad
We define the cardinality constraint function~$b$ as follows.
For each color~$i \in [h]$ we set~$b(\vertllb_i):=\{(q+1)x \mid 1 \le x \le q\}$
and set~$b(\verthlb_i):=\{y \mid 1 \le y \le q\}$.
For each (unordered) color pair~$i,j \in [h], i<j$ we set 
$b(\edgelb_{i,j}):=\{(q+1)x+y \mid$ there is an edge between the $x$th vertex from~$V(i)$
and the $y$th vertex from~$V(j)\}$.

We finally set the committee size~$k:=h(q+2)^2$.
This completes the reduction which clearly runs in polynomial time.
It remains to show that the graph~$G$ has a colorful $h$-clique if and only if
the constructed \DCF{} instance admits a diverse committee.

Assume that $G$~has a colorful $h$-clique~$H$.
Let $\id(H,i)$ denote the index of the color~$i$ vertex from~$H$,
that is, $\id(H,i)=x$ if and only if $H$~contains the $x$th vertex of color~$i$.
It is not hard to verify that a diverse committee can be constructed as follows.
Start with a committee that consists only of $k$~dummy candidates.
For each color~$i \in [h]$ replace $(q+1)\id(H,i)$~dummy candidates by lower color~$i$-selection candidates
and replace $\id(H,i)$~dummy candidates by higher color~$i$-selection candidates.
The diversity constraints of the lower and higher vertex labels are clearly fulfilled by this construction.
Now, consider some edge label~$\edgelb_{i,j}, i<j$.
Our construction ensures that there are exactly~$\id(H,i) (q+1)$ lower color-$i$-selection candidates in the committee
with label~$\edgelb_{i,j}$ and further $\id(H,j)$~higher color-$j$-selection candidates with label~$\edgelb_{i,j}$
(and no further candidates with label~$\edgelb_{i,j}$).
Since $H$~is a clique, we know that the $\id(H,i)$th vertex of color~$i$ is adjacent to he $\id(H,j)$th
vertex of color~$j$ and thus $\id(H,i) (q+1) + \id(H,j) \in b(\edgelb_{i,j})$.
Thus, also the diversity constraints for the edge labels are fulfilled and the committee is indeed diverse.

Finally, assume that the constructed \DCF{} instance admits some diverse committee.
To fulfill the diversity constraints for the lower vertex labels for each color~$i \in [h]$ there is some
number~$\id(i)$ such that there are exactly $(q+1)\id(i)$~lower color-$i$-selection candidates
and further $\id(i)$~higher color-$i$-selection candidates in the committee.
(The former is directly enforced by the diversity constraints for the lower vertex labels
and the latter follows then immediately from the diversity constraints for the higher vertex labels.)
We claim that~$H=\{v^H_i \mid \text{vertex~$v_i$ is the $\id(i)$th vertex of color~$i$}\}$ is an
$h$-colored clique.
It is clear from the definition of~$H$ that $|H|=h$ and that~$H$ is $h$-colored but it remains to show
that $H$~is indeed a clique.
To show this, suppose towards a contradiction that there are two colors~$i,j \in [h], i<j$ such that
vertex~$v^H_i$ and vertex~$v^H_j$ are not adjacent.
Now, there are exactly~$\id(i) (q+1)$ lower color-$i$-selection candidates in the committee
with label~$\edgelb_{i,j}$ and further $\id(j)$~higher color-$j$-selection candidates with label~$\edgelb_{i,j}$
(and no further candidates with label~$\edgelb_{i,j}$).
Furthermore, since the diversity constraint of label~$\edgelb_{i,j}$ is fulfilled, it must hold that
$\id(i) (q+1) + \id(j) \in b(\edgelb_{i,j})$ and so that vertex~$v^H_i$ and vertex~$v^H_j$ are
adjacent---a contradiction.
\end{proof}



However, not all is lost and sometimes brute-force algorithms are
sufficiently effective. For example, if we have a $t$-layered labeling
(where $t$ is a small constant) then 
each candidate has at most $t$
different labels and it suffices to consider each size-$t$ labeling
separately.
A brute-force algorithm based on this idea suffices, e.g., for the
example from the seniority/specialty labels from the introduction (it
would have $O(|C|^4)$ running time, because there are $4$ combinations
of labels $\{\mathit{junior}, \mathit{senior}\}$ and $\{\mathit{AI},
\mathit{economics}\}$; the algorithm could also deal with
non-independent constraints).

\paragraph{Interval constraints}

Interval constraints are more restrictive than general independent
ones, but usually suffice for practical applications and are more
tractable.  For example, for the case of $1$-laminar labellings we
give a linear-time algorithm for recognizing if a feasible committee
exists (for independent constraints, our best algorithm for this task
is quadratic).

\begin{theorem}\label{thm:feasibility:1laminar}
Let $D$ be a diversity specification of interval constraints. If $\lambda$ is $1$-laminar, then \DCF{} can be solved in $O(|C|+|L|)$ time. 
\end{theorem}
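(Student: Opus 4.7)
The plan is to exploit the tree representation of a $1$-laminar labeling (as constructed just before \Cref{thm:poly-independent-1laminar}) and solve \DCF{} by a single bottom-up pass that maintains, for each subtree, only the interval of feasible sub-committee sizes. As in the proof of \Cref{thm:poly-independent-1laminar}, I would augment the input so that every candidate has at least one label that is a leaf of the label-tree $T$, and I would attach a root label $r$ with $C_r = C$ and interval constraint $[k,k]$.

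Concretely, for every node $\ell$ of $T$ I would compute a pair $\bigl(\low(\ell),\high(\ell)\bigr)$ representing the set $F(\ell) \subseteq [0,|C_\ell|]$ of all integers $w$ such that there exists $W \subseteq C_\ell$ with $|W|=w$ and $|W \cap C_{\ell'}| \in [b_1(\ell'),b_2(\ell')]$ for every descendant $\ell'$ of $\ell$ (including $\ell$ itself). The feasibility question for the whole instance is then simply whether $k \in F(r)$, i.e. whether $\low(r) \le k \le \high(r)$. For a leaf $\ell$ with $n_\ell := |C_\ell|$ candidates, I set $\low(\ell) := \max(0,b_1(\ell))$ and $\high(\ell) := \min(n_\ell,b_2(\ell))$, declaring the instance infeasible if $\low(\ell) > \high(\ell)$. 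For an internal node $\ell$ with children $\ell_1,\dots,\ell_d$, I set
\[
\low(\ell) := \max\Bigl(b_1(\ell),\,\sum_{i=1}^{d}\low(\ell_i)\Bigr), \qquad
\high(\ell) := \min\Bigl(b_2(\ell),\,\sum_{i=1}^{d}\high(\ell_i)\Bigr),
\]
again reporting infeasibility whenever $\low(\ell) > \high(\ell)$.

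The correctness rests on a single invariant: $F(\ell)$ is always an \emph{interval} of integers. This is immediate at the leaves; for an internal node it follows because the Minkowski sum of integer intervals $[\low(\ell_i),\high(\ell_i)]$ is the integer interval $\bigl[\sum_i \low(\ell_i),\sum_i \high(\ell_i)\bigr]$, and intersecting with $[b_1(\ell),b_2(\ell)]$ preserves the interval property. The subtrees below different children share no candidates (since the labeling is $1$-laminar), so the Minkowski-sum picture is exactly the right one, and the intersection with $[b_1(\ell),b_2(\ell)]$ enforces $\ell$'s own constraint. A straightforward induction on the tree then shows $F(\ell) = [\low(\ell),\high(\ell)]$.

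For the running time, I would first scan the candidates once, spending $O(|C|)$ time to compute $n_\ell$ for every leaf label $\ell$; the tree $T$ itself has $O(|L|)$ nodes and edges, and each node is processed in time proportional to its number of children, so the bottom-up pass costs $O(|L|)$. Summed up, this yields the claimed $O(|C|+|L|)$ bound. The only subtle point is the interval-preservation step, which is what makes the argument qualitatively different (and asymptotically faster) than the general dynamic program used for independent constraints in \Cref{thm:poly-independent-1laminar}; everything else is a routine tree traversal.
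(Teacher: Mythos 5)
Your proposal is correct and matches the paper's own argument essentially step for step: the paper likewise computes, bottom-up over the label tree, the minimum and maximum feasible sub-committee sizes $A_1[\ell],A_2[\ell]$ for each subtree, proves by induction that the feasible sizes form exactly the interval $[A_1[\ell],A_2[\ell]]$ (via the same Minkowski-sum-of-intervals observation over disjoint children), and checks $A_1[r]=A_2[r]=k$ at the root in $O(|C|+|L|)$ total time. Your explicit augmentation ensuring every candidate lies under some leaf label is a sensible (and slightly more careful) touch borrowed from the proof of \Cref{thm:poly-independent-1laminar}, but it does not change the substance of the argument.
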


\begin{proof}
Let $b_1$ and $b_2$ denote the lower and upper interval constraint functions, respectively.
Let $T$ be a rooted tree representation for $L$; we denote by $r$ the root label that corresponds to the size $k$ constraint on the whole committee size, i.e., $C_r=C$ and $b_1(r)=b_2(r)=k$. For each $\ell \in L$, we denote by $\ch(\ell)$ the set of children of $\ell$ and by $\desc(\ell)$ the set of descendants of $\ell$ in $T$ (including~$\ell$). 

For every label $\ell \in L$, let $\calK_{\ell}$ be the set of committees $W \subseteq C_{\ell}$ satisfying the constraints up until $\ell$, namely, $b_1(x) \leq |W \cap C_{x}| \leq  b_2(x)$ for each $x \in \desc(\ell)$; then, we define $A_1[\ell]$ (respectively, $A_2[\ell]$) to be the minimum (respectively, the maximum) value $w$ for which there is a set $W \in \calK_{\ell}$ with $|W|=w$. If there exists no committee satisfying the aforementioned constraints we set $A_1[\ell]$ and $A_2[\ell]$ to $\infty$ and $-\infty$, respectively. Clearly, there is a $D$-diverse committee if and only if $A_1[r]=A_2[r]=k$. The values $A_1[\ell]$ and $A_2[\ell]$ can be efficiently computed by a dynamic programming in a bottom-up manner as follows. 

For each leaf $\ell \in L$ at $T$, we set $A_1[\ell]=b_1(\ell)$ and $A_2[\ell]=\min\{b_2(\ell), |C_{\ell}|\}$ if $b_1(\ell) \leq \min \{b_2(\ell),|C_{\ell}|\}$; we set $A_1[\ell]=+\infty$ and $A_2[\ell]=-\infty$ otherwise. For each internal node $\ell \in L$, we set $A_1[\ell]=\max \{\sum_{x \in \ch(\ell)}T_1(x),b_1(\ell)\}$ and $A_2[\ell]= \min \{\sum_{x \in \ch(\ell)} A_2[x],b_2(\ell),|C_{\ell}|\}$ if 
\begin{itemize}
\item $A_1[x] \leq A_2[x]$ for all child $x \in \ch(\ell)$; and
\item there are enough candidates in $C_{\ell}$ to fill in the lower bound ($|C_{\ell}| \ge A_1[\ell]$) and the lower bound does not exceed the upper bound, i.e., 
\[\max \{\sum_{x \in \ch(\ell)}A_1[x],b_1(\ell)\}\leq \min \{\sum_{x \in \ch(\ell)} A_2[x],b_2(\ell),|C_{\ell}|\}.\]
\end{itemize}
Otherwise, we set $A_1[\ell]=+\infty$ and $A_2[\ell]=-\infty$. 
This can be done in $O(|C|+|L|)$ time since each $|C_{\ell}|$ for $\ell \in L$ can be computed in $O(|C|)$ time and since the size of the dynamic programming table is at most $|L|$ and each entry can be filled in constant time.

Now we will show by induction that for each $\ell \in L$ and each $w \in \naturals$, there is a set $W \in \calK_{\ell}$ of size $w$ if and only if $A_1[\ell] \leq w \leq A_2[\ell]$. The claim is immediate when $\ch(\ell)=\emptyset$. Now consider an internal node $\ell \in L$ and suppose that the claim holds for all $x \in \ch(\ell)$. 

Suppose first that $A_1[\ell] \leq w \leq A_2[\ell]$. By induction hypothesis, for each child $x \in \ch(\ell)$, there is a committee $W_x \in \calK_{x}$ where $|W_x|=w_x$ for any $w_x \in [A_1[x],A_2[x]]$. By combining all such committees, we have that for any $t \in [\sum_{x \in \ch(\ell)}A_1[x], \sum_{x \in \ch(\ell)}T_2(x)]$, there is a committee $W \subseteq C_{\ell}$ of size $t$ such that $W \cap C_{x} \in \calK_{x}$ for all $x \in \ch(\ell)$. In particular, since $w \in [\sum_{x \in \ch(\ell)}A_1[x], \sum_{x \in \ch(\ell)}T_2(x)]$, there is a set $W \subseteq C_{\ell}$ of size $w$ such that $W \cap C_{x} \in \calK_{x}$ for all $x \in \ch(\ell)$. Since $b_1(\ell) \leq w \leq b_2(\ell)$, we have $W \in \calK_{\ell}$. 

Conversely, suppose that $w$ does not belong to the interval $[A_1[\ell],A_2[\ell]]$. Suppose towards a contradiction that there is a set $W \in \calK_{\ell}$ of size $w$. Notice that for each $x \in \ch(\ell)$, it holds that $W \cap C_x \in \calK_x$ and hence $A_1[x] \leq |W \cap C_x| \leq A_2[x]$ by induction hypothesis. If $w<b_1(\ell)$ or $w>b_2(\ell)$, it is clear that $W \not \in \calK_{\ell}$, a contradiction. Further, if $w>|C_{\ell}|$, $W$ cannot be a subset of $C_{\ell}$, a contradiction. If $w<\sum_{x \in \ch(\ell)}A_1[\ell]$, then $|W \cap C_x| <A_1[x]$ for some label $x \in \ch(\ell)$; however, since $W \cap C_x \in \calK_x$, we have $|W \cap C_x| \geq A_1[x] $ by induction hypothesis, a contradiction. A similar argument leads to a contradiction if we assume $w>\sum_{x \in \ch(\ell)}A_1[\ell]$. 
\end{proof}

For the case of computing the winning committee we no longer obtain a
significant speedup from focusing on interval constraints, but we do
get a much better structural understanding of the problem. In
particular, we can use a greedy algorithm instead of relying on
dynamic programming.
Briefly put, our algorithm (presented as \Cref{alg:greedy})
starts with an empty committee and performs $k$ iterations ($k$ is the
desired committee size), in each extending the committee with a
candidate that increases the score maximally, while ensuring that the
committee can still be extended to one that meets the diversity
constraints.
To show that this greedy algorithm is correct and that it can be
implemented efficiently, we use some notions from the matroid theory.

Formally, a \emph{matroid} is an ordered pair $(C,\calI)$, where $C$
is some finite set and $\calI$ is a family of its subsets (referred to
as the \emph{independent sets} of the matroid). We require that
(I1)~$\emptyset \in \calI$, 
(I2) if $S \subseteq T \in \calI$, then $S \in \calI$, and 
(I3) if $S,T \in \calI$ and $|S| > |T|$, then there exists $s \in S \setminus T$ such that $T\cup \{s\} \in
\calI$. 
The family of maximal (with respect to inclusion) independent sets of
a matroid is called its \emph{basis}.
Many of our arguments use results from matroid theory, but often used
in very different contexts than originally developed. In particular,
the next theorem, in essence, translates the results of
\citet{Yokoi2017} to our setting.


\begin{theorem}\label{DCWD-separable-1laminar}
  Let $D$ be a diversity specification of interval
  constraints. Suppose that $\lambda$ is a 1-laminar, and $f$ is a
  separable function given by a weight vector $\bfw\colon C
  \rightarrow \reals$. Then, \DCWD{} can be solved in
  $O(k^2|C||L|+|C|\log |C|)$ time.
\end{theorem}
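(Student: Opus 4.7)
The plan is to prove the theorem by running the greedy algorithm (Algorithm~1 in the paper), whose correctness rests on a matroid-theoretic exchange argument. The key structural observation is that since $\lambda$ is $1$-laminar, the family
$\calI = \{S \subseteq C : |S \cap C_\ell| \le b_2(\ell) \text{ for every } \ell \in L\}$
is the family of independent sets of a laminar matroid $M$ on the ground set $C$. A size-$k$ $D$-diverse committee is then a size-$k$ independent set of $M$ that additionally satisfies the lower bounds $b_1(\ell) \le |S \cap C_\ell|$, i.e., a ``feasible basis'' in the sense of \citet{Yokoi2017} for laminar matroids with lower-bound requirements.

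The algorithm maintains a partial committee $S$ and, in each of $k$ iterations, picks the candidate $c \notin S$ with maximum weight $w_c$ for which $S \cup \{c\}$ can still be extended to a $D$-diverse size-$k$ committee. The extendability test reduces to feasibility on a residual instance: the candidate set shrinks to $C \setminus (S \cup \{c\})$, the target size becomes $k - |S| - 1$, and at every label $\ell$ the interval bounds are replaced by $\max\{0, b_1(\ell) - n_\ell - [c \in C_\ell]\}$ and $b_2(\ell) - n_\ell - [c \in C_\ell]$, where $n_\ell = |S \cap C_\ell|$. The residual feasibility question is then answered by a single call to the subroutine underlying \Cref{thm:feasibility:1laminar}.

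Correctness reduces to the following exchange lemma: whenever the greedy picks $c$ at some step and an optimal $D$-diverse committee $S^*$ does not contain $c$, there exists $c' \in S^* \setminus S$ with $w_{c'} \le w_c$ such that $(S^* \setminus \{c'\}) \cup \{c\}$ is again $D$-diverse. The proof uses the laminar tree $T$: only the ancestors of $c$ and of $c'$ see their counts changed by the swap, and one chooses $c'$ inside the smallest laminar class whose $S^*$-count is strictly larger than what is needed to complete the extension of $S \cup \{c\}$, exploiting the fact that the greedy's extendability check accepted $c$. Both the upper and lower bounds on every affected label then survive. Iterating the swap transforms any optimal $S^*$ into one containing the greedy output, establishing optimality. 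The main obstacle is this exchange argument, which must simultaneously respect $b_1$, $b_2$, and the laminar nesting; it is exactly the matroid-with-lower-bounds ingredient from Yokoi's work, specialized here to laminar matroids and separable objectives.

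For the running time, the initial sort costs $O(|C| \log |C|)$. In each of the $k$ iterations we scan candidates in order of decreasing weight, and for each candidate we invoke the residual feasibility check. Together with the incremental bookkeeping required to maintain the counts $n_\ell$ and the residual interval bounds along $T$, each such test can be performed in $O(k|L|)$ time (the DP over $T$ inspects at most $|L|$ tree nodes, and the residual upper and lower bounds to be propagated are integers bounded by $k$). Summing over at most $|C|$ candidates tested per iteration and $k$ iterations yields the claimed $O(k^2 |C| |L| + |C| \log |C|)$ bound.
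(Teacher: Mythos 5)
Your algorithm is exactly the paper's \Cref{alg:greedy}, and your overall strategy---greedy selection guided by an ``extendable to a feasible committee'' oracle, with correctness resting on a matroid exchange property---is the same as the paper's. The paper obtains the exchange property by citing \citet{Yokoi2017}: the lower extension ${\overline \calK}_D = \{T \mid \exists S \in \calK_D : T \subseteq S\}$ of the family of feasible committees is a matroid whenever $\calK_D \neq \emptyset$, and it realizes the membership oracle by maintaining a basis $B \in \calK_D$ containing the current set and applying Yokoi's Lemma~6. Your oracle (re-running the residual feasibility test of \Cref{thm:feasibility:1laminar}) is a legitimate alternative and fits within the claimed time bound.

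Two points need repair. First, your ``key structural observation'' identifies the relevant matroid as $\calI = \{S \subseteq C : |S \cap C_\ell| \le b_2(\ell) \text{ for all } \ell\}$. That is not the independence system on which the greedy operates: the correct one is ${\overline \calK}_D$, which is in general a \emph{proper} subfamily of $\calI$ because the lower bounds restrict which upper-bound-feasible sets are extendable. The paper's own footnote gives the counterexample: for ``exactly three females and at most two males'' with $k=3$, a two-male set lies in $\calI$ but not in ${\overline \calK}_D$. Your extendability test does incorporate the residual lower bounds, so the algorithm you describe is correct; only the stated structural claim is wrong. Second, the exchange lemma is the entire mathematical content of the theorem, and your sketch of it is not yet a proof. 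The recipe ``choose $c'$ inside the smallest laminar class whose $S^*$-count is strictly larger than what is needed'' glosses over the real difficulty: one must take the deepest ancestor of $c$ that is tight at its upper bound in $S^*$ (the root always qualifies, since $|S^*|=k$), show that $S^* \setminus S$ meets that class, and then exclude the possibility that every available $c'$ there lies below some label tight at its lower bound---and that last case is ruled out only by a counting argument over the maximal lower-tight classes that exploits the existence of a feasible extension of $S \cup \{c\}$. This can be carried out, and doing so would make your route more self-contained than the paper's appeal to Yokoi's theorem, but as written the crucial step is asserted rather than proved.
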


\begin{algorithm}[t]
\SetKwInOut{Notation}{notation} \SetKwInOut{Output}{output}
\SetKwInOut{Input}{input} 
\SetKw{And}{and}
\SetKw{None}{None}
\caption{Greedy Algorithm $1$}\label{alg:greedy}
\Notation{$\calK_D \neq \emptyset$ is the set of $D$-diverse, size-$k$ committees,
$\overline \calK_D$ is its lower extension.
}
\Input{$f\colon2^C \rightarrow \reals$: the objective function, \\
$k$: the size of the committee.
}
\Output{$W \in \calK_D$
}
	set $W=\emptyset$\;
	\While{$|W|<k$}{
	choose a candidate $y \in C\setminus W$ such that $W\cup\{y\} \in {\overline \calK}_D$ with the maximum improvement $f(\{c\}|W)$\;
	set $W \leftarrow W \cup \{y\}$\;
	}
\end{algorithm}

\begin{proof}
Let $\calK_D$ be the set of $D$-diverse committees of size $k$ and assume that $\calK_D$ is nonempty. For a family of subsets $\calK$ of a finite set $C$, we define its lower extension\footnote{Note that the lower extension does not necessarily ignore the lower bounds. For instance, consider when we want to select a committee of size $3$ such that there are exactly three female candidates and at most two male candidates; the corresponding lower extension ${\overline \calK}_D$ only includes the sets of female candidates of size at most $3$, whereas a male-only committee of size $2$ satisfies the upper bounds.} by
\[
{\overline \calK}=\{\, T \mid \exists S \in \calK: T \subseteq S \,\}.
\]
It is known that if our constraints are given by intervals, the lower extension ${\overline \calK}_D$ of $\calK_D$ comprises the independent sets of a matroid whenever $\calK_D \neq \emptyset$ \citep{Yokoi2017}. Thus, the greedy algorithm (\Cref{alg:greedy}) finds an optimal solution $W \in \argmax_{W^{\prime} \in {\overline \calK}_D} f(W^{\prime})$ (see, e.g., the book of \citet{Korte2006}, Chapter~13). By construction, $|W|=k$ and hence $W$ is a maximal element in ${\overline \calK}_D$, which follows that $W\in \calK_D$. Since $\calK_D \subseteq {\overline \calK}_D$, we have $W \in \argmax_{W^{\prime} \in {\calK}_D} f(W^{\prime})$. Further, \citet{Yokoi2017} showed that checking whether a set $W\cup \{y\}$ belongs to ${\overline \calK}_D$ can be efficiently done by maintaining a set $B \in \calK_D$ with $W \subseteq B$; thus, the greedy algorithm runs in polynomial time

Now it remains to analyze the running time of the algorithm. Sorting the candidates from the best to the worst requires $O(|C|\log |C|)$ time, given a weight vector $\bfw$. In each step, we need to check whether a set $W\cup \{y\}$ belongs to ${\overline \calK}_D$. \citet{Yokoi2017} showed that this can be efficiently done by maintaining a set $B \in \calK_D$ with $W \subseteq B$. Specifically, the following lemma holds. 
\begin{lemma}[Lemma $6$ of \citealp{Yokoi2017}]\label{lem:membership}
Let $(C,\calI)$ be a matroid. Let $W$ be an independent set of the matroid, $B$ be a basis with $W \subseteq B$, and $y \in C\setminus W$. Then, $W \cup \{y\}$ is independent if and only if $y \in B$ or $(B \cup \{y\}) \setminus \{x\}$ is a basis for some $x \in B \setminus W$.
\end{lemma}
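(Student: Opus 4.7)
The plan is to prove each direction of the if-and-only-if separately, leaning only on the matroid axioms (I1)--(I3) and the standard fundamental-circuit fact about bases.

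For the backward implication, both cases reduce to the hereditary axiom (I2). If $y \in B$, then $W \cup \{y\} \subseteq B$, so independence of $W \cup \{y\}$ is inherited from the basis $B$. If instead $(B \cup \{y\}) \setminus \{x\}$ is a basis for some $x \in B \setminus W$, then since $x \notin W$ we have $W \subseteq B \setminus \{x\}$, so $W \cup \{y\} \subseteq (B \cup \{y\}) \setminus \{x\}$, and (I2) again yields independence. This half is essentially bookkeeping.

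For the forward implication, assume $W \cup \{y\}$ is independent and $y \notin B$ (the case $y \in B$ being immediate). I would invoke the fundamental-circuit fact: because $B$ is a basis, $B \cup \{y\}$ has cardinality one above the rank and is therefore dependent, yet it contains a \emph{unique} minimal dependent set $C^{*}$, the fundamental circuit of $y$ with respect to $B$, and $y \in C^{*}$. The crux of the argument is then the observation that not every element of $C^{*} \setminus \{y\}$ can lie in $W$: otherwise $C^{*} \subseteq W \cup \{y\}$, contradicting the assumed independence of $W \cup \{y\}$. Pick any $x \in C^{*} \setminus \{y\}$ with $x \notin W$; by construction $x \in B \setminus W$, as required. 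Removing $x$ from $B \cup \{y\}$ destroys the only circuit it contains, so $(B \cup \{y\}) \setminus \{x\}$ is independent, and since it has cardinality $|B|$ it is a basis.

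The only genuine obstacle is the fundamental-circuit fact itself (existence plus uniqueness of $C^{*}$). If the reader is willing to treat it as a standard textbook lemma, nothing more is needed; otherwise I would give a one-line derivation via circuit elimination: if $B \cup \{y\}$ contained two distinct circuits, both must pass through $y$ (else one would sit inside $B$), and eliminating $y$ between them would produce a circuit inside $B$, contradicting independence of the basis $B$. Apart from this ingredient, the remainder of the proof is a short manipulation of (I2) in each case.
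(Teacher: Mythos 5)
Your proof is correct. Note that the paper itself does not prove this lemma---it is imported as a black box from \citet{Yokoi2017} (their Lemma~6)---so there is no in-paper argument to compare against; what you have written is the standard proof, and it uses precisely the circuit machinery that the paper records immediately afterwards in the lead-up to \Cref{thm:2laminar}, namely the existence and uniqueness of the fundamental circuit $C(B,y)$ and its characterization as $\{x \in B \cup \{y\} \mid (B\cup\{y\})\setminus\{x\} \in \calI\}$. Both directions are sound: the backward direction is indeed just (I2) (with the tiny, easily dismissed observation that in the second case $x \neq y$, since otherwise $(B\cup\{y\})\setminus\{x\}$ would have cardinality $|B|-1$ and could not be a basis), and in the forward direction your argument that some $x \in C^{*}\setminus\{y\}$ must avoid $W$ also silently covers the degenerate possibility $C^{*}=\{y\}$, which would contradict the independence of $W\cup\{y\}$. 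Your one-line circuit-elimination justification of uniqueness is fine.
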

The lemma implies that provided a set $B \in \calK_D$ with $W \subseteq B$, deciding $W\cup \{y\} \in {\overline \calK}_D$ can be verified by checking whether $y \in B$ or $(B\cup \{y\})\setminus \{x\} \in \calK_D$ for some $x \in B\setminus W$; this can be done in $O(k^2|L|)$ time. One can maintain such a superset $B \in {\overline \calK}_D$ of $W$ by first computing a set $B \in \calK_D$ in $O(|C|+|L|)$ time as we have proved in \Cref{thm:feasibility:1laminar}, and updating the set $B$ in each step as follows: If $y \not \in B$, then find a candidate $x \in B\setminus W$ such that $(B \cup \{y\}) \setminus \{x\}  \in \calK_D$, and set $B=(B \cup \{y\}) \setminus \{x\}$; otherwise, we do not change the set $B$.  
Since there are at most $|C|$ iterations, the greedy algorithm runs in $O(k^2|C||L|)$ time. 
\end{proof}

Unfortunately, the greedy algorithm does not work for more involved
labeling structures, but for $2$-laminar labellings we can compute
winning committees by reducing the problem to the matroid intersection
problem \citep{Edmonds1979}. For more involved labeling structures our
problems become $\np$-hard.

\begin{theorem}\label{thm:2laminar}
Let $D$ be a diversity specification of interval constraints. Suppose that $\lambda$ is 2-laminar and $f$ is separable.
Then, \DCF{} can be solved in $O(k^2|C|^3|L|)$ time, and \DCWD{} can be solved in $O(k|C|^3+k^3|C|^2|L|)$ time.
\end{theorem}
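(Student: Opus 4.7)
The plan is to reduce both \DCF{} and \DCWD{} to (weighted) matroid intersection, exploiting the 2-laminar structure. Since $\lambda$ is 2-laminar, partition $L = L_1 \sqcup L_2$ so that the restriction of $\lambda$ to each $L_i$ is 1-laminar. For $i \in \{1,2\}$, let $D_i$ be the diversity specification that keeps only the interval constraints of labels in $L_i$ together with the overall size-$k$ constraint, and let $\calK_{D_i}$ denote the family of $D_i$-diverse size-$k$ committees. Each nonemptiness can be tested in $O(|C|+|L|)$ by \Cref{thm:feasibility:1laminar}, so we may assume both are nonempty (otherwise \DCF{} is trivially infeasible).

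By the theorem of \citet{Yokoi2017} invoked in the proof of \Cref{DCWD-separable-1laminar}, the lower extension $\overline{\calK}_{D_i}$ is the family of independent sets of a matroid $M_i = (C,\calI_i)$ whose bases are exactly $\calK_{D_i}$. The key observation is that a size-$k$ committee $W$ is $D$-diverse if and only if it satisfies both $D_1$ and $D_2$, i.e., $W$ is a common basis of $M_1$ and $M_2$. Hence \DCF{} reduces to deciding whether $M_1,M_2$ share a common basis (cardinality matroid intersection), and \DCWD{}, since $f$ is separable with weights $w_c$, reduces to computing a maximum-weight common basis (weighted matroid intersection, \citealp{Edmonds1979}).

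To reach the stated running times, we implement an efficient independence oracle for each $M_i$ by maintaining a basis $B_i \in \calK_{D_i}$ containing the current common independent set $W$; an initial $B_i$ is computed in $O(|C|+|L|)$ by \Cref{thm:feasibility:1laminar}. Applying \Cref{lem:membership}, the test $W\cup\{y\}\in\calI_i$ reduces either to checking $y\in B_i$ or to searching for $x\in B_i\setminus W$ with $(B_i\cup\{y\})\setminus\{x\}\in\calK_{D_i}$; each such 1-laminar basis test runs in $O(|L|)$ given maintained label counts, so one oracle call (and any necessary update of $B_i$) takes $O(k|L|)$. Plugging this oracle into a standard augmenting-paths matroid intersection procedure yields $O(k)$ augmentations, each constructing an exchange digraph via $O(|C|^2)$ oracle queries, giving the \DCF{} bound $O(k^2|C|^3|L|)$. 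For \DCWD{}, we first sort the candidates by weight in $O(|C|\log|C|)$ and then invoke a weighted matroid intersection algorithm (for instance, Frank's primal-dual method): its $k$ weighted augmentations contribute $O(k|C|^3)$ from shortest-path computations in the exchange graph and $O(k^3|C|^2|L|)$ from oracle work, giving the claimed $O(k|C|^3+k^3|C|^2|L|)$.

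The main obstacle is the complexity accounting rather than correctness: once both matroids are set up, optimality follows directly from classical matroid intersection theory, but matching the precise stated running times requires choosing the right algorithmic variant and carefully amortizing the maintenance of the two bases $B_1,B_2$ across augmentations so that oracle calls, exchange-graph construction, and (in the weighted case) shortest-path relaxations collectively meet the target bounds.
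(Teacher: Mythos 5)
Your proposal follows essentially the same route as the paper: decompose the $2$-laminar structure into two $1$-laminar matroids via Yokoi's lower-extension result, reduce \DCF{} to cardinality matroid intersection and \DCWD{} to weighted matroid intersection (Frank), and implement the independence oracles by maintaining bases $B_1,B_2 \supseteq W$ using \Cref{lem:membership}. The one step you leave implicit---that the maintained sets $B_i$ remain in $\calK_i$ after taking the symmetric difference with each augmenting path---is precisely what the paper establishes via Frank's exchange lemma (\Cref{lem:exchange}).
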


In the subsequent proof, we will use the following notions and results in matroid theory: Given a matroid $(C,\calI)$, the sets in $2^C \setminus \calI$ are called {\em dependent}, and a minimal dependent set of a matroid is called {\em circuit}. Crucial properties of circuits are the following. 
\begin{lemma}
Let $(C,\calI)$ be a matroid, $W \in \calI$, and $y \in C \setminus W$ such that $W \cup \{y\} \not \in \calI$. Then the set $W \cup \{y\}$ contains a unique circuit.
\end{lemma}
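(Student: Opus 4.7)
The plan is to treat this as a classical matroid-theoretic fact and carry out the standard proof by contradiction. First I would establish a simple observation: any circuit $C \subseteq W \cup \{y\}$ must contain $y$. Indeed, if $y \notin C$, then $C \subseteq W$, and since $W \in \calI$, axiom (I2) would force $C \in \calI$, contradicting that $C$ is dependent. Existence of at least one circuit is then immediate, since $W \cup \{y\}$ is itself dependent and one can strip elements one at a time while preserving dependence until a minimal dependent subset is obtained.

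For uniqueness, suppose toward contradiction that $C_1 \neq C_2$ are two distinct circuits contained in $W \cup \{y\}$. Both contain $y$ by the observation above. I would then invoke the classical \emph{circuit elimination} property of matroids: whenever two distinct circuits share an element $e$, the set $(C_1 \cup C_2) \setminus \{e\}$ contains a circuit. Applied with $e = y$, this yields a circuit contained in $(C_1 \cup C_2) \setminus \{y\} \subseteq W$, which by axiom (I2) must be independent---a contradiction.

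The only real work is justifying the circuit elimination property from axioms (I1)--(I3), and this is the main obstacle. Rather than reprove a textbook result, I would cite a standard reference (e.g., Oxley). For a self-contained derivation, I would work with the rank function $r$ induced by $\calI$: from axiom (I3) one shows that all maximal independent subsets of a given set have the same size (so $r$ is well-defined) and that $r$ is submodular. Since every circuit $C$ satisfies $r(C) = |C| - 1$, submodularity gives $r(C_1 \cup C_2) + r(C_1 \cap C_2) \leq r(C_1) + r(C_2) = |C_1| + |C_2| - 2$, and because $C_1 \cap C_2$ is a proper subset of the circuit $C_1$ (by minimality, no circuit is contained in another), it is independent and contributes $|C_1 \cap C_2|$, yielding $r(C_1 \cup C_2) \leq |C_1 \cup C_2| - 2$. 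On the other hand, if $(C_1 \cup C_2) \setminus \{y\}$ were independent, we would have $r(C_1 \cup C_2) \geq |C_1 \cup C_2| - 1$, the desired contradiction. Once this submodular rank estimate is in place, the lemma follows directly.
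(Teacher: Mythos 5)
Your proof is correct. Note, however, that the paper itself offers no proof of this lemma at all---it is stated as a known fact from matroid theory (immediately before the definition of $C(W,y)$), so there is nothing to compare against on the paper's side; what you have written is the standard textbook argument that the paper implicitly relies on. Your two key observations---that every circuit inside $W \cup \{y\}$ must contain $y$ (else it would sit inside the independent set $W$ and violate (I2)), and that two distinct circuits through $y$ would, after deleting $y$, leave a dependent subset of $W$---are exactly right, and your rank-function computation is a clean way to close the uniqueness step: since $C_1 \cap C_2$ is a proper subset of a circuit and hence independent, submodularity gives $r(C_1 \cup C_2) \leq |C_1 \cup C_2| - 2$, which is incompatible with $(C_1 \cup C_2)\setminus\{y\} \subseteq W$ being independent. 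This route in fact bypasses the full circuit-elimination axiom rather than merely invoking it, which is a small economy. The one piece you leave to a citation is the submodularity of the rank function (and the well-definedness of rank via (I3)); that is a genuine textbook result and citing it is entirely appropriate here, so I see no gap.
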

We write $C(W,y)$ for the unique circuit in $W\cup \{y\}$. The set $C(W,y)$ can be characterized by the elements that can replace $y$, i.e., for each independent set $W$ of a matroid $(C,\calI)$ and $y \in C \setminus W$ with $W \cup \{y\} \not \in \calI$, 
\[
C(W,y)=\{\, x \in W \cup \{y\} \mid W\cup \{y\} \setminus \{x\} \in \calI \,\}.
\]

The following lemma by \citet{Frank1981} serves as a fundamental property for proving the matroid intersection theorem. 

\begin{lemma}[\citealp{Frank1981}]\label{lem:exchange}
Let $(C,\calI)$ be a matroid and $W \in \calI$. Let $x_1,x_2, \ldots,x_s \in W$ and $y_1,y_2,\ldots,y_s \not \in W$ where $W\cup \{y_j\} \not \in \calI$ for $j \in [s]$. Suppose that 
\begin{itemize}
\item[$(${\rm i}$)$] $x_j \in C(W,y_j)$ for $j \in [s]$ and
\item[$(${\rm ii}$)$] $x_j \not \in C(W,y_t)$ for $1\leq j <t <s$.
\end{itemize}
Then, $(W \setminus \{x_1,x_2,\ldots,x_s\})\cup \{y_1,y_2,\ldots,y_s\} \in \calI$.
\end{lemma}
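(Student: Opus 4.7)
The plan is to prove the statement by induction on $s$, peeling off the pair $(x_1,y_1)$ at each step. The base case $s=1$ follows immediately from the characterization
\[
C(W,y)=\{\, x \in W \cup \{y\} \mid W\cup \{y\} \setminus \{x\} \in \calI \,\}
\]
that was stated just before the lemma: condition (i) gives $x_1 \in C(W,y_1)$, which by this characterization is equivalent to $(W \setminus \{x_1\}) \cup \{y_1\} \in \calI$.

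For the inductive step, assume the lemma for all shorter sequences and define
\[
W' := (W \setminus \{x_1\}) \cup \{y_1\}.
\]
The base case shows $W' \in \calI$. I will then verify that the two hypotheses of the lemma continue to hold for $W'$ with the truncated sequences $x_2,\dots,x_s$ and $y_2,\dots,y_s$; once this is done, the inductive hypothesis yields
\[
(W' \setminus \{x_2,\dots,x_s\})\cup \{y_2,\dots,y_s\} \in \calI,
\]
which, because $y_1 \notin W$ (so $y_1 \neq x_j$ for all $j\ge 2$), is precisely
$(W \setminus \{x_1,\dots,x_s\})\cup \{y_1,\dots,y_s\}$.

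The technical heart of the argument, and the step I expect to be the main obstacle, is showing that for every $t \in \{2,\dots,s\}$ the fundamental circuit is preserved under the first exchange, i.e.
\[
C(W',y_t) = C(W,y_t).
\]
Here is why this should hold. By hypothesis (ii) applied with $j=1$, we have $x_1 \notin C(W,y_t)$, so $C(W,y_t) \subseteq (W \cup \{y_t\}) \setminus \{x_1\} \subseteq W' \cup \{y_t\}$. Hence $W' \cup \{y_t\}$ is dependent. Since $W'$ is independent and $y_t \notin W'$, the set $W' \cup \{y_t\}$ contains a \emph{unique} circuit (by the lemma stated immediately before the statement), and this unique circuit must contain $y_t$. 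That unique circuit is $C(W',y_t)$ by definition; since the circuit $C(W,y_t)$ already sits inside $W' \cup \{y_t\}$ and also contains $y_t$, minimality of circuits forces $C(W',y_t)=C(W,y_t)$.

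With this identification in hand, the verification of the inductive hypotheses is routine. Condition (i) for $W'$: $x_t \in C(W,y_t)=C(W',y_t)$ for each $t\ge 2$, and $x_t \in W'$ because $x_t \neq x_1$ and $x_t \in W$. Dependence of $W' \cup \{y_t\}$ was established above. Condition (ii) for $W'$: for $2 \le j < t$, hypothesis (ii) gives $x_j \notin C(W,y_t)=C(W',y_t)$, as required. Applying the inductive hypothesis to $W'$ then completes the proof. The only subtlety one has to be careful about is that the elements $x_1,\dots,x_s$ are distinct and $y_1,\dots,y_s$ are distinct (which one should either assume tacitly or note follows from hypotheses (i) and (ii) through the fact that $x_j$ is forced to lie in $C(W,y_j)$ but not in $C(W,y_t)$ for $t>j$).
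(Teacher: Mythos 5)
The paper does not prove this lemma at all --- it is imported from Frank (1981) as a black box --- so there is no in-paper argument to compare against; what you have written is the standard inductive proof of the exchange property, and it is correct. The key step, namely that $x_1\notin C(W,y_t)$ forces $C(W,y_t)\subseteq W'\cup\{y_t\}$, so that $W'\cup\{y_t\}$ is dependent and its \emph{unique} circuit must coincide with the circuit $C(W,y_t)$ already sitting inside it, is exactly right, and the remaining bookkeeping (membership of $x_t$ in $W'$, of $y_t$ outside $W'$, distinctness) is handled adequately. One remark: as printed, condition (ii) reads ``$1\le j<t<s$,'' and under that literal reading the lemma is false --- for $s=2$ the condition is vacuous, and in a graphic matroid one can take $x_1=x_2$ to be an edge lying on both fundamental cycles of $y_1$ and $y_2$, producing a dependent set. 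The intended condition is $1\le j<t\le s$, and your proof tacitly uses this corrected version, since you invoke $x_1\notin C(W,y_t)$ for every $t\in\{2,\dots,s\}$ including $t=s$ (and your derivation of distinctness of the $x_j$ likewise needs $t=s$ to be covered). That is the right reading of the statement; the discrepancy is a typo in the lemma as transcribed, not a gap in your argument.
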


Now we are ready to prove \Cref{thm:2laminar}. 

\begin{proof}
Let $b_1$ and $b_2$ denote the lower and upper interval constraint functions, respectively.
Let $L=L_1 \cup L_2, L_1 \cap L_2 = \emptyset$ be a partition of~$L$ such
that for each $i=1,2$, the labeling restricted to the labels from $L_i$ is $1$-laminar. 
For $i=1,2$, we denote by $\calK_i$ the set of committees of size $k$ satisfying the constrains in $L_i$, i.e., $\calK_i=\{\, S \subseteq C \mid |S|=k~\mbox{and}~b_1(\ell) \leq |S \cap C_{\ell}| \leq b_2(\ell)~\mbox{for all}~\ell \in L_i\,\}$. If at least one of them is empty, then there is no $D$-diverse committee; thus we assume otherwise. 
We have argued that the lower extension $\overline \calK_i$ for each $i =1,2$ forms the independent sets of a matroid when $\lambda|_{L_i}$ is $1$-laminar. Thus, our problem can be reduced to finding a maximum common independent set over the two matroids. That is, we will try to compute the following value:
\[
\max \{\, |W| \mid W \in {\overline \calK_1} \cap {\overline \calK_2}\,\}. 
\]
Clearly, there is a $D$-diverse committee of size $k$ if and only the maximum value equals $k$. It is well-known that this problem can be solved by Edmond's matroid intersection algorithm \citep{Edmonds1979}, given a membership oracle for each $\overline \calK_i$. The idea is that starting with the empty set, we repeatedly find `alternating paths' and augment $W$ by one element in each iteration while keeping the property $W \in {\overline \calK}_1 \cap {\overline K}_2$. Specifically, we apply the notion $C(W,y)$ to $(C,{\overline \calK}_i)$ and write $C_i(W,y)$ for each $i=1,2$.
For $W \in {\overline \calK_1} \cap {\overline \calK_2}$, we define an auxiliary graph $G_W=(C,A^{(1)}_W\cup A^{(2)}_W)$ where the set of arcs is given by
\begin{align*}
&A^{(1)}_W= \{\, (x,y)\mid W\cup \{y\} \not \in {\overline \calK}_1 \land x \in C_1(W,y) \,\},\\
&A^{(2)}_W= \{\, (y,x)\mid W\cup \{y\} \not \in {\overline \calK}_2 \land x \in C_2(W,y) \,\},
\end{align*}
for $i=1,2$. We then look for a shortest path from $S^{(1)}_W$ to $S^{(2)}_W$, where 
\[
S^{(i)}_W=\{\, y \in C\setminus W \mid W \cup \{y\} \in \overline \calK_i \,\},
\]
for $i=1,2$. We increase the size of $W$ by taking the symmetric difference with the path. It was shown that this procedure computes the desired value. We provide a formal description of the algorithm below (\Cref{alg:intersection}). 


Similarly to \citet{Yokoi2017}, we can efficiently construct an auxiliary graph in each step by maintaining a set $B_i$ such that $W \subseteq B_i$ and $B_i \in \calK_i$ for each $i=1,2$: First, as we have seen in \Cref{lem:membership}, we can determine the membership of a given set in $\overline \calK_i$ in polynomial time. Moreover, it can be easily verified that the unique circuit $C_i(W,y)$ coincides with $C_i(B_i,y)$ when $W\cup \{y\} \not\in {\overline \calK}_i$.
\begin{lemma}\label{lem:circuits}
Let $(C,\calI)$ be a matroid. Let $W$ be an independent set of the matroid, $B$ be a basis with $W \subseteq B$, and $y \in C\setminus W$ with $W\cup \{y\}$ being dependent. Then, $C(W,y)=C(B,y)$.
\end{lemma}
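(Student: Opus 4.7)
The plan is to deduce the equality directly from the preceding lemma on uniqueness of circuits, together with the hereditary property (I2) of matroids. The whole argument is essentially a one-line observation once the uniqueness of the circuit in $B\cup\{y\}$ has been established, so no deep idea is required; the only bookkeeping is to verify that the uniqueness lemma applies with $B$ in place of $W$.

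First I would argue that $y \notin B$ and that $B \cup \{y\}$ is dependent. For $y \notin B$: if we had $y \in B$, then $W \cup \{y\} \subseteq B$, and since $B$ is independent property (I2) would force $W \cup \{y\}$ to be independent, contradicting the hypothesis. For dependence of $B \cup \{y\}$: the inclusion $W \cup \{y\} \subseteq B \cup \{y\}$ together with the dependence of $W \cup \{y\}$ rules out independence of $B \cup \{y\}$ by the contrapositive of (I2). Hence the preceding lemma applies to the basis $B$ and the element $y$, so $B \cup \{y\}$ contains a unique circuit, namely $C(B,y)$.

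Next I would observe that $C(W,y)$ is itself a circuit of the matroid $(C,\calI)$ (not merely a minimal dependent subset of $W \cup \{y\}$) and that $C(W,y) \subseteq W \cup \{y\} \subseteq B \cup \{y\}$. Thus $C(W,y)$ is a circuit contained in $B \cup \{y\}$. By the uniqueness established in the previous paragraph, the only such circuit is $C(B,y)$, and therefore $C(W,y) = C(B,y)$, concluding the proof. The only step that even warrants care is ensuring that the hypotheses of the uniqueness lemma are satisfied for $B$, which as shown is immediate from the matroid axioms.
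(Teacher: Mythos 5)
Your proof is correct and follows essentially the same route as the paper's: establish that $B\cup\{y\}$ is dependent, invoke uniqueness of the circuit in $B\cup\{y\}$, and conclude from $C(W,y)\subseteq W\cup\{y\}\subseteq B\cup\{y\}$. The paper's version is terser (it omits the explicit checks that $y\notin B$ and that dependence is inherited upward via the contrapositive of (I2)), but the argument is identical in substance.
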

\begin{proof}
Notice that $B\cup \{y\}$ is dependent: thus it contains a unique circuit $C(B,y)$. Then, $C(B,y)=C(W,y)$ clearly holds since $C(W,y) \subseteq B\cup \{y\}$.
\end{proof}

Thus, we will show how to maintain such a set $B_i \in  \calK_i$ with $W  \subseteq B_i$ for each $i=1,2$. We will first compute a set $B_i \in \calK_i$ for each $i=1,2$ in $O(|C||L|)$ time. Now suppose that $W\in {\overline \calK_1} \cap {\overline \calK_2}$. Let $B_i \in \calK_i$ where $W\subseteq B_i$ for $i=1,2$, and $P=(y_0,x_1,y_1,\ldots,x_s,y_s)$ be a shortest path in $G_W$ with $y_0 \in S^{(1)}_W$ and $y_s \in S^{(2)}_W$. Notice that $y_j \not \in B_1$ for $j=1,2,\ldots,s$ since otherwise $B_1$ contains a dependent set $W \cup \{y_j\}$ of a matroid $(C,\calK_1)$, contradicting $($I2$)$; similarly, $y_j \not \in B_2$ for $j=s,s-1,\ldots,1$ since otherwise $B_2$ contains a dependent set $W \cup \{y_j\}$ of a matroid $(C,\calK_2)$, a contradiction. If $y_0 \not \in B_1$, then there is a candidate $x \in B_1 \setminus W$ such that $(B_1\cup \{y_0\})\setminus \{x\} \in \calK_1$ by \Cref{lem:membership}, and we set $B_1$ to be $(B_1\cup \{y_0\})\setminus \{x\}$. Similarly, if $y_s \not \in B_2$, then there is a candidate $x \in B_2 \setminus W$ such that $(B_2\cup \{y_s\})\setminus \{x\} \in \calK_2$ by \Cref{lem:membership}, and we set $B_2$ to be $(B_2\cup \{y_0\})\setminus \{x\}$. We then update each $B_i$ as follows.  
\[
B^{\prime}_i=(B_i\cup \{y_0,y_1,\ldots,y_s\})\setminus \{x_1,x_2,\ldots,x_s\}. 
\]
Clearly, $W\cup \{y_0,y_1,\ldots,y_s\})\setminus \{x_1,x_2,\ldots,x_s\} \subseteq B^{\prime}_i$ for each $i=1,2$ since $y_0,y_1,\ldots,y_s \not \in W$ and $x_1,x_2,\ldots,x_s \in W$; further we have the following.

\begin{claim}
For $i=1,2$, $B^{\prime}_i \in \calK_i$.
\end{claim}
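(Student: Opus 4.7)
The plan is to apply Frank's Lemma (Lemma~\ref{lem:exchange}) to the matroids $(C,\overline{\calK}_1)$ and $(C,\overline{\calK}_2)$ separately, using the shortest-path property of $P$ to supply the crucial non-circuit condition. First I would record the easy facts that fall out of the preprocessing step: after the adjustment we have $y_0 \in B_1$ and $y_s \in B_2$, and in both cases $W \subseteq B_i$ still holds (the removed element is drawn from $B_i \setminus W$). The remaining $y_j$'s that are inserted into $B_i$ must lie outside $B_i$, since otherwise $B_i \in \overline{\calK}_i$ would contain the dependent set $W \cup \{y_j\}$, violating (I2); and each $x_j$ lies in $W \subseteq B_i$. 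A cardinality count then gives $|B'_i|=k$, so it suffices to show $B'_i \in \overline{\calK}_i$.

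For $B'_1$ I would apply Frank's Lemma to the pairs $(x_j,y_j)$ for $j=1,\dots,s$. Because $W \cup \{y_j\} \notin \overline{\calK}_1$ and $W \subseteq B_1$, Lemma~\ref{lem:circuits} yields $C_1(B_1,y_j)=C_1(W,y_j)$. The arc $(x_j,y_j) \in A^{(1)}_W$ then gives condition~(i) of the lemma. For condition~(ii), suppose toward a contradiction that $x_j \in C_1(B_1,y_t)=C_1(W,y_t)$ for some $1\le j<t\le s$. Then $(x_j,y_t) \in A^{(1)}_W$ would be an arc, producing the shortcut
\[
y_0 \to x_1 \to y_1 \to \cdots \to x_j \to y_t \to x_{t+1} \to \cdots \to y_s,
\]
which has $2s - 2(t-j) < 2s$ edges, contradicting the shortness of $P$. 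Frank's Lemma therefore delivers $B'_1 \in \overline{\calK}_1$, and hence $B'_1 \in \calK_1$.

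For $B'_2$ the argument is symmetric, but the indices must be reversed so that Frank's Lemma applies in its monotone form: I would pair the elements $y_{s-1},y_{s-2},\dots,y_0$ with $x_s,x_{s-1},\dots,x_1$ (in that order), using the arcs $(y_{j-1},x_j) \in A^{(2)}_W$. Condition~(i) is then immediate. For condition~(ii), a hypothetical arc $(y_b,x_a) \in A^{(2)}_W$ with $a \ge b+2$ would give the shortcut $y_0 \to \cdots \to y_b \to x_a \to y_a \to \cdots \to y_s$ of length $2s - 2(a-b-1) < 2s$, once again contradicting shortness of $P$.

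The main obstacle I expect is not conceptual but bookkeeping: making sure the preprocessing swap that forces $y_0 \in B_1$ (respectively $y_s \in B_2$) does not break the identity $C_i(B_i,y_j)=C_i(W,y_j)$ used above (it does not, because the removed element is taken from $B_i \setminus W$, so $W \subseteq B_i$ still holds and Lemma~\ref{lem:circuits} still applies), and that the reversed matching for matroid $2$ is lined up correctly with the monotone hypothesis of Frank's Lemma. Once these points are handled, both applications reduce to exactly the shortcut argument that underlies Edmonds's matroid intersection algorithm.
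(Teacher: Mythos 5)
Your proposal is correct and follows essentially the same route as the paper: apply Frank's exchange lemma to each matroid separately, with condition (i) supplied by the arc definitions together with the circuit identity $C_i(W,y)=C_i(B_i,y)$, condition (ii) supplied by the shortest-path (no-shortcut) argument, and the final step $B'_i\in\calK_i$ obtained from $|B'_i|=k$ and maximality in $\overline{\calK}_i$. You in fact spell out the index bookkeeping for the reversed application to $B'_2$ and the shortcut length computations more explicitly than the paper does, but the underlying argument is identical.
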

\begin{proof}
First, we show that $B_1\cup \{y_0\}$, $y_1,y_2,\ldots,y_s$, and $x_1,x_2,\ldots,x_s$ satisfy the requirements of \Cref{lem:exchange}. Since $y_0 \in B_1$, we know that $B_1=B_1\cup \{y_0\} \in \calK_1$. The condition $(${\rm i}$)$ is satisfied because $(x_j,y_j) \in A^{(1)}_W$ and $C_1(W,y_j)=C_1(B_1,y_j)$ for all $j=1,2,\ldots,s$. The condition $(${\rm ii}$)$ is satisfied because otherwise $x_j \not \in C(B,y_t)=C(W,y_t)$ for some $j<t$ and hence the path could be shortcut. Thus, $B^{\prime}_1 \in {\overline \calK}_1$. Moreover,  since $|B_1|=|B^{\prime}_1|=k$, $B^{\prime}_1$ is a maximal set in ${\overline \calK}_1$ and hence $B^{\prime}_1 \in \calK_1$. Similarly, one can show that $B_2\cup \{y_s\}$, $y_{s-1},y_{s-2},\ldots,y_0$, and $x_s,x_{s-1},\ldots,x_1$ satisfy the requirements of \Cref{lem:exchange}, and therefore $B^{\prime}_2 \in \calK_2$.
\end{proof}
It remains to analyze the running time of the algorithm. By \Cref{lem:membership} and \Cref{lem:circuits}, constructing an auxiliary graph $G_W$ can be done in $O(k^2|C|^2|L|)$ time, given that we have a superset $B_i$ of $W$ where $B_i \in \calK_i$, and finding a shortest path can be done in $O(|C|)$ time by breadth first search. Since there are at most $|C|$ augmentations, the overall running time is $O(k^2|C|^3|L|)$ time.
\end{proof}

\begin{algorithm}[t]
\SetKwInOut{Input}{input} \SetKwInOut{Output}{output}
\SetKw{And}{and}
\SetKw{None}{None}
\caption{Matroid intersection}\label{alg:intersection}
\Input{$\calK_i \neq \emptyset$ for $i=1,2$.
}
\Output{$W \in {\overline \calK_1} \cap {\overline \calK_2}$ of maximum cardinality
}
	set $W=\emptyset$\;
	compute $B_i \in \calK_i$ for each $i=1,2$\;
	\While{there is a shortest path $P=(y_0,x_1,y_1,\ldots,x_s,y_s)$ from $S^{(1)}_W$ to $S^{(2)}_W$ in $G_W$}{

	\If{$y_0 \not \in B_1$}{
	find $x \in B_1 \setminus W$ such that $(B_1\setminus \{x\})\cup \{y_0\} \in \calK_1$\;
	set $B_1=(B_1\setminus \{x\})\cup \{y_0\}$\;
	}
	\If{$y_s \not \in B_2$}{
	find $x \in B_2 \setminus W$ such that $(B_2\setminus \{x\})\cup \{y_s\} \in \calK_2$\;
	set $B_2=(B_2\setminus \{x\})\cup \{y_s\}$\;
	}
	 set $W=W \cup \{y_0,y_1,\ldots,y_s\})\setminus \{x_1,x_2,\ldots,x_s\}$ and $B_i=B_i \cup \{y_0,y_1,\ldots,y_s\})\setminus \{x_1,x_2,\ldots,x_s\}$ for each $i=1,2$\;
	}
\end{algorithm}

\begin{theorem}
Let $D$ be a diversity specification of interval constraints. Suppose that $\lambda$ is 2-laminar and $f$ is a separable function given by a weight vector $\bfw\colon C \rightarrow \reals$. Then, \DCWD{} can be solved in $O(k|C|^3+k^3|C|^2|L|)$ time.
\end{theorem}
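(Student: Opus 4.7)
The plan is to reduce the problem to \emph{weighted matroid intersection} and then apply the classical augmenting-path algorithm of Edmonds and Frank, following the same template as in the proof of \Cref{thm:2laminar} but with weighted shortest paths in place of unweighted ones.

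First, I would reuse the setup from \Cref{thm:2laminar}: write $L = L_1 \cup L_2$ where each $\lambda|_{L_i}$ is $1$-laminar, and let $\calK_i$ be the family of size-$k$ committees satisfying the interval constraints in $L_i$. By the result of \citet{Yokoi2017} already invoked for \Cref{thm:2laminar}, whenever $\calK_i \ne \emptyset$ the lower extension $\overline{\calK}_i$ forms the independent sets of a matroid on $C$. Since $f$ is separable, $f(W) = \sum_{c \in W} w_c$, so the task reduces to finding a common independent set of $\overline{\calK}_1$ and $\overline{\calK}_2$ of cardinality exactly $k$ and maximum total weight (if no common independent set of size $k$ exists we can detect this already by \Cref{thm:2laminar} and output infeasibility).

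Next, I would run the weighted matroid intersection algorithm. It proceeds exactly as \Cref{alg:intersection}, starting with $W = \emptyset$ and augmenting along a path $P = (y_0,x_1,y_1,\ldots,x_s,y_s)$ from $S^{(1)}_W$ to $S^{(2)}_W$ in the auxiliary graph $G_W$, except that the path is now chosen to be of minimum total weight $\sum_j w_{y_j} - \sum_j w_{x_j}$ (breaking ties by number of arcs). It is a classical result (see, e.g., \citet{Frank1981} or Chapter 41 of Schrijver's book) that after $j$ such augmentations, $W$ is a maximum-weight common independent set among all common independent sets of size $j$; hence after $k$ augmentations $W$ is the desired optimum. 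Correctness of the auxiliary graph $G_W$ and of the updates to the supersets $B_i \in \calK_i$ with $W \subseteq B_i$ is identical to the argument in \Cref{thm:2laminar}, relying on \Cref{lem:exchange} and \Cref{lem:circuits}.

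Finally, for the running time, I would argue as follows. Sorting the candidates by $w_c$ costs $O(|C|\log|C|)$, which is absorbed by the rest. There are exactly $k$ augmentations. In each augmentation, constructing $G_W$ requires performing $O(|C|^2)$ circuit/membership queries on the two matroids, each handled in $O(k^2|L|)$ time as in the proof of \Cref{thm:2laminar}, giving $O(k^2|C|^2|L|)$ per iteration, and $O(k^3|C|^2|L|)$ in total. Finding a minimum-weight shortest path in the weighted auxiliary graph can be done by Bellman--Ford in $O(|V|\cdot|E|) = O(|C|^3)$ per iteration (arcs may have negative weights, but $G_W$ contains no negative cycle because $W$ is optimal for its size), contributing $O(k|C|^3)$ overall. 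The updates of $W$ and of the $B_i$ along the path cost $O(k|L|)$ per iteration and are dominated. Summing gives the claimed $O(k|C|^3 + k^3|C|^2|L|)$ bound. The only subtlety I anticipate is verifying that the weighted shortest path rule preserves the invariant $W \in \overline{\calK}_1 \cap \overline{\calK}_2$ after swapping along $P$; this is exactly the content of the weighted version of \Cref{lem:exchange} and is what forces ``shortest'' (minimum number of arcs) tie-breaking among minimum-weight paths.
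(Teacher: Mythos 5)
Your proposal is correct and follows essentially the same route as the paper: the paper's (much terser) proof simply invokes the weighted matroid intersection algorithm of \citet{Frank1981}, which computes a maximum-weight common independent set of each size $k' \in [k]$ in $O(|C|^3 + \gamma)$ time per size, where $\gamma$ is the cost of building the auxiliary graph $G_W$ (which is $O(k^2|C|^2|L|)$ by the oracle analysis in \Cref{thm:2laminar}), yielding the same $O(k|C|^3 + k^3|C|^2|L|)$ bound. Your write-up merely fills in the weighted augmenting-path and tie-breaking details that the paper leaves to the citation.
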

\begin{proof}
Again, for $i=1,2$, we denote by $\calK_i$ the set of committees of size $k$ satisfying the constrains in $L_i$, i.e., $\calK_i=\{\, S \subseteq C \mid |S|=k~\mbox{and}~b_1(\ell) \leq |S \cap C_{\ell}| \leq b_2(\ell)~\mbox{for all}~\ell \in L_i\,\}$. We assume that $\calK_1 \cap \calK_2$ is nonempty. \citet{Frank1981} has shown that for each $k^{\prime} \in [k]$, one can calculate $W_{k^{\prime}} \in {\overline \calK_1} \cap {\overline \calK_2}$ where
\[
f(W_{k^{\prime}})= \max \{\, f(W) \mid W \in {\overline \calK_1} \cap {\overline \calK_2} \land |W|=k^{\prime}\,\}
\]
in $O(|C|^3+\gamma)$, where $\gamma$ is the time required for constructing an auxiliary graph $G_W$ for each $W \in {\overline \calK_1} \cap {\overline \calK_2}$. This completes the proof.
\end{proof}

The bound on the number of layers turns out to be necessary: the following theorem shows that finding a $D$-diverse committee is intractable even with $3$-layers.

\begin{theorem} \label{thm:inverval-NPh-3layers}
\DCF{} is $\np$-hard even if $D$ is a diversity specification of interval constraints and $\lambda$ is 3-layered.
\end{theorem}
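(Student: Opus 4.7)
The plan is to give a polynomial-time reduction from the classical $\np$-hard \textsc{$3$-Dimensional Matching} (3DM) problem. Recall that in 3DM we are given three pairwise disjoint sets $X,Y,Z$, each of cardinality $q$, together with a set $T \subseteq X \times Y \times Z$ of triples, and we must decide whether there exists a subset $M \subseteq T$ of size $q$ such that every element of $X \cup Y \cup Z$ appears in exactly one triple of $M$. The $3$-layered structure of 3DM---each triple carries one coordinate from each of three independent ground sets---makes it a natural source of hardness here, mirroring the fact that our $2$-laminar tractability (\Cref{thm:2laminar}) relied on matroid intersection, whereas $3$-matroid intersection is $\np$-hard.

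Given a 3DM instance $(X,Y,Z,T)$, I would construct a \DCF{} instance as follows. Create one candidate $c_t$ for each triple $t = (x,y,z) \in T$. Introduce three disjoint label sets $L_1 = \{\ell^X_x : x \in X\}$, $L_2 = \{\ell^Y_y : y \in Y\}$, and $L_3 = \{\ell^Z_z : z \in Z\}$, and set $L = L_1 \cup L_2 \cup L_3$. Define $\lambda(c_t) := \{\ell^X_x, \ell^Y_y, \ell^Z_z\}$ for $t=(x,y,z)$. For every label $\ell \in L$, impose the tight interval constraint $b_1(\ell) = b_2(\ell) = 1$. Finally, set the committee size $k := q$. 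This construction is clearly computable in polynomial time.

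Next I would verify that $\lambda$ is $3$-layered with partition $L = L_1 \cup L_2 \cup L_3$: within $L_1$, each candidate $c_t$ carries exactly one label (the one determined by the first coordinate of $t$), so $C_{\ell^X_x} \cap C_{\ell^X_{x'}} = \emptyset$ whenever $x \ne x'$; the same holds for $L_2$ and $L_3$. Hence each restricted labeling is $1$-layered.

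For correctness, I would show both directions. If $M \subseteq T$ is a perfect 3-dimensional matching, then $W := \{c_t : t \in M\}$ has size $q = k$, and tightness of the matching means that for every label $\ell \in L$ we have $|W \cap C_\ell| = 1 \in [b_1(\ell), b_2(\ell)]$, so $W$ is $D$-diverse. Conversely, if $W$ is a $D$-diverse committee of size $q$, then $M := \{t : c_t \in W\}$ has size $q$, and the interval constraint $|W \cap C_\ell| = 1$ for every $\ell$ forces each element of $X \cup Y \cup Z$ to appear in exactly one triple of $M$, making $M$ a perfect 3-dimensional matching. The main (minor) obstacle is only a careful bookkeeping check that the three layers are indeed $1$-layered and that the tight intervals exactly encode the ``each element covered once'' requirement; the reduction itself is very direct.
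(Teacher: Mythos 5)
Your reduction from \textsc{$3$-Dimensional Matching} is exactly the one the paper uses: one candidate per triple, one label per element arranged in three layers according to the ground sets $X$, $Y$, $Z$, tight intervals $[1,1]$ on every label, and committee size equal to the number of elements per ground set. The argument is correct and matches the paper's proof, with the added (welcome) explicit check that the resulting labeling is $3$-layered.
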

\begin{proof}
We reduce from {\sc 3-Dimensional Matching} ($3$-DM). Given three disjoint sets $X,Y,Z$ of size $n$ and a set $T\subseteq X \times Y \times Z$ of ordered triplets, $3$-DM asks whether there is a set of $n$ triplets in $T$ such that each element is contained in exactly one triplet.

Given an instance $((X,Y,Z),T)$ of $3$-DM, we create one candidate $t_i=(x_i,y_i,B_i)$ for each $t_i \in T$. The set of labels is given by $L=X\cup Y\cup Z$. Each candidate $t_i$ has exactly three labels $\lambda(t_i)=\{x_i,y_i,B_i\}$. The lower bound $b_1(\ell)$ and the upper bound $b_2(\ell)$ of each label $\ell \in L$ are set to be $1$. Lastly, we set $k=n$. It can be easily verified that $W \subseteq T$ is a desired solution for $3$-DM if and only if $W$ is a $D$-diverse committee of size $k$, namely, $|W|=k$, and 
\begin{itemize}
\item $|C_x \cap W| = 1$ for each $x \in X$, 
\item $|C_y \cap W| = 1$ for each $y \in Y$, and 
\item $|C_z \cap W| = 1$ for each $z \in Z$.
\end{itemize}
\end{proof}

Nevertheless, if the number of labels is small (i.e., is taken as the
parameter from the point of view of parametrized complexity theory)
we can compute optimal diverse committees efficiently. The next
theorem expresses this formally (note that interval diversity
specifications can be phrased as linear programs, but this language
allows also some more involved constraints, such as, ``at the research
meeting the number of \emph{senior} researchers should be larger than
the number of \emph{junior} ones, but without taking the \emph{PhD
  students} into account'').

\begin{theorem}\label{thm:DCWD-separable-paraL}
  Let $f$ be separable objective function and let $D$ be a diversity
  specification which can be expressed through a linear program
  $\texttt{LP}$ with the set of variables \mbox{$\{x_{\ell}\colon \ell
    \in L\}$} such that $d \in D$ if and only if $\texttt{LP}$
  instantiated with variables $x_\ell$ giving the numbers of committee
  members with labels $\ell$
  is feasible.  Then, \DCWD{} is in $\fpt$ with respect to $|L|$.
\end{theorem}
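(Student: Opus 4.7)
My plan is to reduce \DCWD{} to an integer linear program whose number of integer variables depends only on $|L|$, and then invoke Lenstra's theorem for integer programming in fixed dimension.

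First, I would bucket the candidates by their exact label profile: for each $\sigma \subseteq L$, set $C_\sigma = \{c \in C : \lambda(c) = \sigma\}$ and $n_\sigma = |C_\sigma|$, giving at most $2^{|L|}$ nonempty buckets. Sort each bucket in non-increasing order of weight as $w_{\sigma,1} \ge w_{\sigma,2} \ge \cdots \ge w_{\sigma,n_\sigma}$ and set $g_\sigma(j) := \sum_{i=1}^{j} w_{\sigma,i}$. Because $f$ is separable and because the diversity specification sees only the counts $x_\ell = \sum_{\sigma \ni \ell} y_\sigma$, it suffices to decide, for each profile $\sigma$, how many candidates $y_\sigma \in \{0,1,\ldots,n_\sigma\}$ to pick: within each bucket the optimal choice is always its top $y_\sigma$ candidates, and the total objective value is $\sum_\sigma g_\sigma(y_\sigma)$.

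Next, I would encode this as a (mixed) ILP with integer variables $y_\sigma$ and auxiliary variables $t_\sigma$ for each $\sigma \subseteq L$. The feasibility constraints are $\sum_\sigma y_\sigma = k$, $0 \le y_\sigma \le n_\sigma$, and the inequalities of $\texttt{LP}$ after the substitution $x_\ell = \sum_{\sigma \ni \ell} y_\sigma$. The main technical point is linearizing the objective $\sum_\sigma g_\sigma(y_\sigma)$, which is piecewise-linear concave in each $y_\sigma$ (the marginal gains $w_{\sigma,j}$ are non-increasing in $j$). For each $\sigma$ and each $j \in \{0,1,\ldots,n_\sigma - 1\}$ I would add the inequality
\[
t_\sigma \;\le\; g_\sigma(j) + w_{\sigma,j+1}(y_\sigma - j),
\]
and maximize $\sum_\sigma t_\sigma$. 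Each such line passes through the consecutive points $(j,g_\sigma(j))$ and $(j+1,g_\sigma(j+1))$ and, by concavity, lies weakly above $g_\sigma$ at every integer of its domain; hence at any integer value of $y_\sigma$ the tightest such constraint evaluates to $g_\sigma(y_\sigma)$, so the linearised objective agrees with the true objective on all integer-feasible points.

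Finally, I would invoke Lenstra's theorem: a (mixed) ILP with $p$ integer variables and input size $N$ is solvable in $f(p) \cdot \mathrm{poly}(N)$ time. In our encoding $p = 2^{|L|}$, while $N$ is polynomial in $|C|$ and $|L|$, so the overall running time is $g(|L|) \cdot \mathrm{poly}(|C|,|L|)$ for some computable $g$, placing \DCWD{} in $\fpt$ with respect to $|L|$. I expect the main step to argue carefully is the piecewise-linear concave envelope; once that is in place, the rest is routine substitution and a black-box application of Lenstra's algorithm.
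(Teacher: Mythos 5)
Your proposal is correct and follows essentially the same route as the paper: one integer variable per label profile in $2^L$, a piecewise-linear concave per-profile objective arising from separability (top-$j$ prefix sums of sorted weights), the substitution $x_\ell = \sum_{Y \ni \ell} z_Y$ into the feasibility LP, and Lenstra's theorem for integer programs with few integer variables. The only difference is that you linearize the concave objective by hand via auxiliary continuous variables and secant-line constraints, whereas the paper invokes a black-box FPT result for mixed integer programs with piecewise-linear concave objectives; your handling is more self-contained and equally valid.
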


\begin{proof}
To prove the theorem we will use the classic result of \citet{Len83} which states that the problem of solving a mixed integer program is in $\fpt$ for the parameter being the number of integer variables. We create the following mixed integer linear program. For each combination of labels $Y \in 2^L$ we create an integer variable $z_Y$. Each $z_Y$ denotes the number of committee members which have exactly the labels from $Y$ in an optimal committee. Further, for each $Y \in 2^L$ we construct a function $g_{Y}\colon \naturals \to \reals$ in the following way: $g_{Y}(x) = \max_{S\colon |S| = x} f(S)$. In words, $g_{Y}(x)$ gives the value of the best, according to the objective function $f$ and ignoring the distribution constraints, $x$-element committee which consists only of candidates who have sets of labels equal to $Y$. Clearly, since $f$ is separable, the functions $g_{Y}$ are piecewise-linear and concave. We will construct a mixed ILP with the following non-linear objective function:
\begin{align*}
 \text{minimize~} \sum_{Y \in 2^L} g_Y(z_Y) \text{.}  
\end{align*}
We can use piecewise linear concave functions in the minimized objective functions by the result of \citet{mixedIntegerProgrammingFPT} (Theorem 2 in their work)---such programs can still be solved in $\fpt$ time with respect to the number of integer variables. The set of constraints is defined by taking the feasibility linear program $\texttt{LP}$ and for each $\ell \in L$ setting $x_{\ell} = \sum_{Y\colon \ell \in Y}z_Y$. 
\end{proof}

\section{Submodular Objective Functions}
\label{sec:submodular}

The case of submodular objective functions is computationally far more
difficult than that of separable ones. Indeed, even without diversity
constraints computing a winning Chamberlin--Courant committee
(specified through a submodular objective function) is
$\np$-hard~\citep{budgetSocialChoice} and, in general, the best
polynomial-time approximation algorithm for submodular functions is
the classic greedy algorithm~\citep{submodular,fei:j:cover}, which
achieves the $1-\nicefrac{1}{e} \approx 0.63$ approximation
ratio. \footnote{This algorithm starts with an empty committee and
  extends it with candidates one-by-one, always choosing the candidate
  that increases the objective function maximally.}
Adding diversity constraints makes our problems even more difficult.
Nonetheless, we provide a polynomial-time
$\nicefrac{1}{2}$-approximation algorithm for the case of interval
constraints and $1$-laminar labellings.

\begin{theorem}\label{thm:DCWD-1/2-approx}
Let $D$ be a diversity specification of interval constraints. If $\lambda$ is $1$-laminar and $f$ is a monotone submodular function, then \Cref{alg:greedy} gives $\frac{1}{2}$-approximation algorithm for \DCWD{}.
\end{theorem}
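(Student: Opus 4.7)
The plan is to leverage the matroid structure that has already been established for the $1$-laminar interval-constraint case. Recall from the proof of \Cref{DCWD-separable-1laminar} (via the result of \citet{Yokoi2017}) that whenever $\calK_D\neq\emptyset$, the lower extension $\overline{\calK}_D$ is the family of independent sets of a matroid on~$C$. Moreover, since every element of $\calK_D$ has size~$k$ and $\calK_D$ is exactly the family of inclusion-maximal members of $\overline{\calK}_D$, the bases of this matroid are precisely the members of $\calK_D$, and the matroid has rank~$k$.

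With this in place, \Cref{alg:greedy} is just the textbook greedy procedure for maximizing a monotone function subject to a matroid constraint: in each of its $k$ iterations it adds a candidate $y$ of maximum marginal gain $f(\{y\}\mid W)$ among those that preserve independence, and the exchange axiom~(I3) guarantees such a~$y$ always exists until $|W|=k$. Thus the algorithm terminates with a basis $W\in\calK_D$ of size~$k$.

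The key step is then to invoke the classical theorem of Fisher, Nemhauser, and Wolsey: the greedy algorithm for maximizing a monotone submodular function $f$ with $f(\emptyset)=0$ over a matroid returns an independent set whose value is at least $\frac{1}{2}$ of the maximum value attained on any independent set. Because $f$ is monotone, this maximum is attained at a basis of the matroid, i.e.\ at an element of $\calK_D$, so
\[
f(W) \;\geq\; \frac{1}{2}\,\max_{S\in\overline{\calK}_D}f(S) \;=\; \frac{1}{2}\,\max_{S\in\calK_D}f(S),
\]
which is exactly the desired $\frac{1}{2}$-approximation guarantee for \DCWD{}.

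The main obstacle is conceptual rather than technical: one has to verify that passing from $\calK_D$ to its lower extension does not distort the analysis. This splits into two checks, both already implicit above. First, the feasibility test ``$W\cup\{y\}\in\overline{\calK}_D$'' used by the greedy step can be implemented in polynomial time by the membership procedure from the proof of \Cref{DCWD-separable-1laminar} (maintaining a basis $B\in\calK_D$ with $W\subseteq B$ and applying \Cref{lem:membership}). Second, the output really lies in $\calK_D$ rather than merely in $\overline{\calK}_D$; this is immediate from rank~$k$ together with $|W|=k$. The approximation analysis itself is a direct invocation of the greedy-for-matroid theorem and requires no calculation beyond what is in the classical proof.
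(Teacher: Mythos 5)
Your proposal is correct and follows essentially the same route as the paper: identify $\overline{\calK}_D$ as the independent sets of a matroid via \citet{Yokoi2017}, invoke the Fisher--Nemhauser--Wolsey $\frac{1}{2}$-bound for greedy over a matroid, and relate the optimum over $\overline{\calK}_D$ to that over $\calK_D$ (the paper uses the containment $\calK_D\subseteq\overline{\calK}_D$ rather than your equality of maxima, but both suffice). You simply spell out in more detail why the greedy output is a basis and hence lies in $\calK_D$, which the paper dispatches with ``by construction.''
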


\begin{proof}
\citet{Fisher1978} showed that if ${\overline \calK}_D$ is the set of independent sets of a matroid, \Cref{alg:greedy} produces a solution $W$ such that $f(W) \geq \frac{1}{2} f(O^{\prime \prime})$ for any optimal solution $O^{\prime \prime}\in \argmax_{W^{\prime}\in {\overline \calK}_D} f(W^{\prime})$. Now let $O^{\prime}\in \argmax_{W^{\prime}\in {\calK}_D} f(W^{\prime})$. Since $\calK_D \subseteq {\overline \calK}_D$, we have $f(W) \geq \frac{1}{2}f(O^{\prime})$. Clearly, by construction, $W \in \calK(D)$. 
\end{proof}

\paragraph{Balanced Committees}

For the balanced committee model it is possible to achieve notably
stronger results.  Since the balanced case is practically relevant
from practical standpoint, we provide its simpler definition, renaming
it as BCWD.

\begin{definition}[\BCWD{}]
Given a set of candidates $C$, 
two subsets $A, B \subseteq C$ such that $A \cap B= \emptyset$ and $A\cup B=C$,
a desired committee size $k = 2k'$, and an objective function $f$,
find a committee $W \subseteq C$ that maximizes $f(W)$ and that satisfies $|W \cap A|=|W \cap B|=k^{\prime}$.
\end{definition}


For the case of BCWD, we provide a polynomial-time $1-\nicefrac{1}{e}$
approximation algorithm. Since this is the best possible approximation
ratio for general submodular functions without diversity constraints,
it is also the best one for the balanced setting (formally, the
results without diversity constraints translate because we could
assume that all the candidates with one of the labels have no
influence on the objective value and use the remaining ones to model
an unconstrained submodular optimization problem).
Our algorithm (presented as \Cref{alg:sub}) is very similar
to the classic greedy algorithm, but it considers candidates in pairs.


\begin{algorithm}[t]
\SetKwInOut{Input}{input} \SetKwInOut{Output}{output}
\SetKw{And}{and}
\SetKw{None}{None}
\caption{Greedy Algorithm for BCWD}\label{alg:sub}
\Input{$f\colon 2^C \rightarrow \reals$, $A \subseteq C$ and $B \subseteq C$ where $A \cap B= \emptyset$, $|A| \geq k^{\prime}$ and $|B| \geq k^{\prime}$
}
\Output{$W \subseteq C$ where $|W \cap A|=|W \cap B|=k^{\prime}$
}
	\While{$|W|<2k^{\prime}$}{
	choose a pair $e=\{a,b\}$ where $a \in A\setminus W$ and $b \in B\setminus W$ with maximum improvement $f(e|W)$\;
	set $W \leftarrow W \cup e$\;
	}
\end{algorithm}

\begin{theorem}\label{thm:submodular}
Let $f$ be a monotone submodular function. \Cref{alg:sub} gives $(1-\frac{1}{e})$-approximation algorithm for \BCWD{}.
\end{theorem}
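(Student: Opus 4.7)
The plan is to adapt the classical Nemhauser--Wolsey--Fisher analysis of the greedy algorithm for monotone submodular maximization subject to a cardinality constraint, but with ``elements'' replaced by pairs drawn from $A \times B$. Let $\mathrm{OPT}$ be an optimal balanced committee with value $f^* = f(\mathrm{OPT})$, and let $W_i$ denote the committee held by Algorithm~3 after $i$ iterations, so that $|W_i| = 2i$ and the algorithm terminates at $W_{k'}$.

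The main step I would establish is the one-step improvement inequality
\[
f(e_i \mid W_{i-1}) \;\geq\; \tfrac{1}{k'}\bigl(f^* - f(W_{i-1})\bigr),
\]
where $e_i$ is the pair chosen at iteration $i$. To prove it, I would arbitrarily pair up the optimal committee as $\mathrm{OPT} = \bigcup_{j=1}^{k'} e_j^*$ with $e_j^* = \{a_j^*, b_j^*\}$, $a_j^* \in A$, $b_j^* \in B$. A standard telescoping argument together with monotonicity and submodularity of $f$ gives
\[
f^* - f(W_{i-1}) \;\leq\; f(W_{i-1} \cup \mathrm{OPT}) - f(W_{i-1}) \;\leq\; \sum_{j=1}^{k'} f(e_j^* \mid W_{i-1}).
\]

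The main obstacle, and the only genuine deviation from the textbook proof, is that an optimal pair $e_j^*$ might not be a valid candidate for the greedy step, because one (or both) of its elements may already lie in $W_{i-1}$. To handle this I would argue that in each such case the marginal contribution $f(e_j^* \mid W_{i-1})$ is dominated by the marginal contribution of some \emph{valid} pair: if $a_j^* \in W_{i-1}$ but $b_j^* \notin W_{i-1}$, pick any fresh $a \in A \setminus W_{i-1}$ (which exists, since $|A| \geq k'$ and only $i-1 < k'$ elements of $A$ are in $W_{i-1}$) and observe that, by monotonicity, $f(\{a, b_j^*\} \mid W_{i-1}) \geq f(\{b_j^*\} \mid W_{i-1}) = f(e_j^* \mid W_{i-1})$; symmetrically for the other side, and the doubly-covered case contributes $0$. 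Since the greedy picks the marginally best valid pair, this yields $f(e_i \mid W_{i-1}) \geq f(e_j^* \mid W_{i-1})$ for every $j$, and averaging gives the desired inequality.

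From the one-step inequality the conclusion is routine: rearranging gives $f^* - f(W_i) \leq (1 - \tfrac{1}{k'})(f^* - f(W_{i-1}))$, so by induction $f^* - f(W_{k'}) \leq (1 - \tfrac{1}{k'})^{k'} f^* \leq \tfrac{1}{e} f^*$, whence $f(W_{k'}) \geq (1 - \tfrac{1}{e}) f^*$. Since $|W_{k'} \cap A| = |W_{k'} \cap B| = k'$ by construction, $W_{k'}$ is a feasible balanced committee, and the approximation guarantee follows.
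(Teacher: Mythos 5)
Your proof is correct and follows essentially the same route as the paper's: both adapt the Nemhauser--Wolsey--Fisher greedy analysis to pairs by bounding $f^*-f(W_{i-1})$ via a telescoping/submodularity argument over at most $k'$ valid $A$--$B$ pairs, each dominated by the greedy choice. The only cosmetic difference is bookkeeping --- the paper pairs up $O'\setminus W_i$ by side and pads the surplus side with a fresh element of the other, while you pair up all of $\mathrm{OPT}$ first and then patch pairs with a foot already in $W_{i-1}$ --- but the key idea (monotonicity lets an invalid or partial optimal pair be dominated by a valid one) is identical.
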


\begin{proof}
We first observe the following lemma.
\begin{lemma}\label{lem:subm}
Let $f$ be a submodular function. Then, for any $S \subseteq T$ and any subset $X \subseteq C \setminus T$ of candidates,
\[
f(X|T) \leq f(X|S).
\]
\end{lemma}
\begin{proof}
Let $P=S \cup X$ and $Q=T$. Then $P \cap Q =S$ and $P \cup Q=T \cup X$. Hence, 
\begin{align*}
&f(P \cup Q) + f(P \cap Q) \leq f(P) +f(Q)\\
&\Leftrightarrow f(T \cup X) + f(S) \leq f(S \cup X) +f(T)\\
&\Leftrightarrow f(T \cup X) - f(T) \leq f(S \cup X) -f(S)\\
&\Leftrightarrow f(X|T) \leq f(X|S).
\end{align*}
\end{proof}

Let $W_{i}$ be our greedy solution after $i$ iterations of the algorithm and $e_{i}=\{a_i,b_i\}$ be the pair chosen in the iteration where $a_i \in A$ and $b_i \in B$. Let $O^\prime$ be an optimal solution where $|O^\prime \cap A|=|O^\prime \cap B|=k^{\prime}$. We first prove the following lemma.

\begin{lemma}\label{lem2}
$f(W_{i}) \geq \frac{1}{k^{\prime}} f(O^\prime) + (1-\frac{1}{k^{\prime}})f(W_{i})  $
\end{lemma}
\begin{proof}
Let $O^\prime \setminus W_{i} = \{a^\prime_1,\ldots,a^\prime_p\}\cup \{b^\prime_1,\ldots,b^\prime_q\}$ where each $a^\prime_s \in A$ and $b^\prime_t \in B$ for $s \in [p]$ and $t \in [q]$. Without loss of generality, suppose that $p  \geq q$.
Clearly, we have $p \leq k^{\prime}$. Also, $f(O^\prime) \leq f(O^\prime \cup W_{i})$ by the monotonicity of $f$. 
Now, we let $e^\prime_j=\{a^\prime_j,b^\prime_j\}$ for $j \in [q]$, and observe that
\begin{align*}
&f(O^\prime\cup W_{i})=f(W_{i}) + \sum^q_{j=1}f(e^\prime_j |W_{i} \cup e^\prime_1 \cup  \ldots \cup e^\prime_{j-1}) \\
&+\sum^{p-q}_{j=1}f(a^\prime_{q+j} |W_{i} \cup e^\prime_1 \cup \ldots \cup e^\prime_{q}\cup \{a^\prime_{q+1},\ldots,a^\prime_{q+j-1}\} ) \\
&\leq f(W_{i})  + \sum^q_{j=1}f(e^\prime_j |W_{i}) +\sum^{p-q}_{j=1}f(a^\prime_{q+j} |W_{i})\\
&\leq f(W_{i})  + \sum^q_{j=1}f(e^\prime_j |W_{i})+\sum^{p-q}_{j=1}f(\{a^\prime_{q+j},a_{i+1}\} |W_{i})\\
&\leq f(W_{i})  + qf(e_{i+1} |W_{i})+(p-q)f(e_{i+1}|W_{i}).
\end{align*}
Here the first inequality follows from \Cref{lem:subm}, the second inequality follows from the monotonicity of $f$, and the third inequality follows from the choice of $e_{i+1}$. Now, since $p \leq k^{\prime}$, we have $f(O^\prime\cup W_{i}) \leq f(W_{i}) +k^{\prime} f(e_{i+1} |W_{i})$.
Thus, $f(O^\prime) \leq f(W_{i})  + k^{\prime} f(e_{i+1} |W_{i})$.
\end{proof}

Now by a usual calculation, we have
\begin{align*}
f(W_{k^{\prime}})& \geq \frac{1}{k^{\prime}} f(O^\prime)+ (1-\frac{1}{k^{\prime}}) f(W_{k^{\prime}-1})\\
& \geq \cdots \\
& \geq f(O^\prime)\cdot [1-(1-\frac{1}{k^{\prime}})^{k^{\prime}}]\\
& \geq f(O^\prime) (1-\frac{1}{e}).
\end{align*}
\end{proof}



We note that \Cref{thm:submodular} is a special case of a much more general result on approximating the Multidimensional Knapsack problem~\citep{KulST13}, which gives the same approximation ratio even for maximizing monotone submodular functions subject to interval constraints consisting only of upper bounds.
Yet, our algorithm is simpler and faster than this general approach.

\Cref{thm:submodular} applies to all submodular
functions. 
However, for some special cases it is possible to achieve
much stronger results. For example, for the Chamberlin--Courant
function we find a polynomial-time approximation scheme (PTAS).

\begin{theorem}\label{thm:chamb-cour}
For each Chamberlin--Courant function there exists a PTAS for \BCWD{}.
\end{theorem}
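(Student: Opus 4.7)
The plan is to adapt the known PTAS for unconstrained Chamberlin--Courant winner determination to the balanced setting by combining partial enumeration of balanced seeds with the balanced greedy completion of \Cref{alg:sub}. Fix $\epsilon > 0$; following the unconstrained PTAS blueprint, pick $t = t(\epsilon)$ as an even integer constant in $|C|$, and enumerate every size-$t$ balanced partial committee $S_0 \subseteq C$ with $|S_0 \cap A| = |S_0 \cap B| = t/2$. For each seed $S_0$ we extend it to a balanced committee of size $k$ by iterating \Cref{alg:sub} on the residual function $f'(X) := f^{\mathrm{CC}}(S_0 \cup X) - f^{\mathrm{CC}}(S_0)$, which remains monotone submodular so that the greedy pair-selection step is well-defined. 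We output the best size-$k$ committee across all seeds. The number of seeds is $|C|^{O(t(\epsilon))}$, which is polynomial once $\epsilon$ is fixed, and each completion runs in polynomial time.

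For correctness, the core is a \emph{rebalancing lemma} that lifts the unconstrained CC PTAS to the balanced case. The unconstrained PTAS guarantees that, for any optimal unconstrained committee, there is a size-$t$ subset which, once enumerated and greedily completed, yields a $(1-\epsilon)$-approximation. We apply this to the optimal balanced committee $W^*$: it gives a size-$t$ seed $S^* \subseteq W^*$, which in general may be unbalanced. The rebalancing lemma replaces at most $O(t)$ elements of $S^*$ with opposite-side candidates from $W^* \setminus S^*$ to obtain a balanced seed $\tilde S^* \subseteq W^*$ with $f^{\mathrm{CC}}(\tilde S^*) \geq f^{\mathrm{CC}}(S^*) - \eta(\epsilon)\cdot f^{\mathrm{CC}}(W^*)$, where $\eta(\epsilon) \to 0$ as $\epsilon \to 0$. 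The proof of the lemma uses submodularity (via \Cref{lem:subm}) together with the per-voter additive structure of $f^{\mathrm{CC}}$: swapping a seed element for an opposite-side element of $W^*$ changes the committee score by at most a single candidate's singleton Borda contribution, which averages to roughly $f^{\mathrm{CC}}(W^*)/k$ per swap on near-optimal committees. Since our enumeration encounters $\tilde S^*$ and since the completion step on the residual inherits the unconstrained PTAS guarantee, the returned committee achieves value at least $(1 - \epsilon - \eta(\epsilon))\, f^{\mathrm{CC}}(W^*)$.

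The main obstacle is quantifying the rebalancing loss tightly enough that $\eta(\epsilon)$ can be absorbed by a constant-factor increase in $t$. In the worst case $S^*$ lies entirely in $A$, so all $t/2$ candidates from $B \cap W^*$ that we must swap in might happen to be the cheapest members of $W^*$; controlling this requires a careful inductive swap argument that processes swaps in order of smallest marginal loss on the current partial committee, combined with a charging scheme that attributes the removed elements' contribution to voters already adequately covered by retained opposite-side members of $W^*$. Once the rebalancing lemma is established with the right quantitative bound, the PTAS claim follows by choosing $t$ large enough so that the $(1-\epsilon)$-guarantee of the unconstrained PTAS and the $(1-O(\eta(\epsilon)))$-loss from rebalancing combine to give the desired $(1-\epsilon)$-approximation for \BCWD{} under any Chamberlin--Courant objective.
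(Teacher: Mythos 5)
Your proposal hinges on two claims that are neither established nor, as stated, likely to hold. First, you treat the unconstrained Chamberlin--Courant PTAS as if it had a ``partial enumeration plus greedy completion'' structure, guaranteeing that some size-$t$ subset of the optimal committee, once enumerated and completed, yields a $(1-\epsilon)$-approximation. The result of \citet{SFS15} (Theorem~11) that the paper invokes is not of that form: its guarantee is the \emph{absolute} bound $f^{\mathrm{CC}}(S) \geq nm\left(1 - \frac{2w(k)}{k}\right)$ for a size-$k$ committee, measured against the trivial upper bound $nm$ on the score of \emph{any} committee, not against a seed drawn from the optimum. Second, and more seriously, the rebalancing lemma you rely on is left unproven, and its quantitative form is doubtful: in a CC committee a single member can be the representative of essentially all voters, so swapping one seed element for an opposite-side element of $W^*$ can cost a constant fraction of $f^{\mathrm{CC}}(W^*)$, not roughly $f^{\mathrm{CC}}(W^*)/k$. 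In the worst case you describe, where $S^*$ lies entirely in $A$ and $t/2$ of its elements must be replaced, submodularity only guarantees that the retained half keeps about half of $f^{\mathrm{CC}}(S^*)$ --- a factor-$2$ loss, not a $(1-\eta(\epsilon))$ loss. You flag this obstacle yourself, but the charging scheme you sketch is not carried out, so the argument does not close.

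The paper's proof sidesteps rebalancing entirely by exploiting the absolute nature of the \citet{SFS15} guarantee. For $k=2k'$ with $k$ at least a threshold $k_t(\epsilon)$ (smaller $k$ is handled by brute force in constant-degree polynomial time), one runs their algorithm to obtain an \emph{unconstrained} committee of size $k'$ with score at least $nm\left(1-\frac{2w(k')}{k'}\right) \geq (1-\epsilon)\, nm \geq (1-\epsilon)\, f^{\mathrm{CC}}(W^*)$, since $f^{\mathrm{CC}}(W^*) \leq nm$. A size-$k'$ committee has at most $k'$ members on each side, so it can always be padded with $k'$ further candidates chosen arbitrarily to meet $|W \cap A| = |W \cap B| = k'$, and by monotonicity padding cannot decrease the score. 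No enumeration of seeds, no use of \Cref{alg:sub}, and no swap argument is needed. To salvage your route you would have to actually prove the rebalancing lemma, which appears to require precisely the global bound ($f^{\mathrm{CC}}(W^*) \leq nm$ together with the absolute guarantee) that makes the direct argument work in one step.
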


The main idea behind the proof is to use the PTAS of \citet{SFS15} to compute a committee of size $k'$ and then to
complement it so that it satisfies the diversity constraints. The
specific nature of the algorithm of Skowron et al\@. makes it possible
to do this efficiently and effectively.

\begin{proof}
Let $V$ be the set of $n$ voters used to represent the Chamberlin--Courant objective function $f$. We first make a simple observation that for each $S \subseteq C$ we have $f(S) = \sum_{v_i \in V}\max_{c \in S}(m - \pos_i(c)) \leq \sum_{v_i \in V} m = nm$.
We will use as a black box the result of\citet{SFS15}[Theorem 11] who have shown an algorithm for selecting a committee $S$ of $k$ candidates such that $f(S) \geq nm\left(1 - \frac{2w(k)}{k}\right)$, where $w$ is the Lambert's $W$-function (in particular $w(k) = o(\log(k))$). Now, for each $\epsilon > 0$ we can construct a polynomial-time $(1-\epsilon)$-approximation algorithm for \BCWD{} as follows. We first define a threshold value $k_t \in \naturals$, as a smallest integer such that $\frac{2w(k_t)}{k_t} \leq \epsilon$. Now, if $k < k_t$, then we run a brute-force algorithm trying each $k$-element subset of the set of candidates and, this way, we can find an exactly optimal committee. If $k > k_t$, we run the algorithm of Skowron~et~al. for the instance of our problem without constraints and for the size of the committee equal to $k$; here we have $\epsilon \geq \frac{2w(k_t)}{k_t} \geq \frac{2w(k)}{k}$. Next, we complement the committee returned by this algorithm with some $k$ candidates selected in an arbitrary way so that the diversity constraints are satisfied. Since, adding the candidates can only increase the value of the Chamberlin--Courant objective function, the value of the objective function for such constructed committee is at least equal to $nm(1 - \frac{2w(k)}{k}) \geq nm(1 - \epsilon)$. Thus, this is a $(1-\epsilon)$-approximation committee. 
\end{proof}

$\newline$
\Cref{thm:chamb-cour} also extends to the case of the constant number of labels $\ell_1, \ell_1', \ell_2, \ell_2', \ldots, \ell_p, \ell_p'$ which satisfy the following two conditions:
\begin{inparaenum}[(i)]
\item all the constraints have the following form: for $i \in [p]$ we require the same number of candidates with label $\ell_i$ as those with label $\ell_i'$,
\item for each combination of labels $(r_1, r_2, \ldots, r_p)$ with $r_i \in \{\ell_i, \ell_i'\}$ for each $i \in [p]$, there exist at least $k$ candidates having all labels $r_1, \ldots, r_p$.
\end{inparaenum}




\section{Recognizing Structure of the Labels}\label{sec:structure-recognition}
In this section we ask how difficult it is to recognize a given
labeling structure if it is not provided with the problem. While in
most cases it is natural to assume that the structure would be
provided (as it would be a common knowledge of the society for which
we would want to compute the committee), it is interesting to be able
to derive it automatically. 

In the previous sections we have seen that there usually are
polynomial-time algorithms for computing winning committees for
$1$-laminar labellings and, sometimes, there are such algorithms for
$2$-laminar ones. However, $3$-laminar labellings always lead to
$\np$-hardness results.  The same holds for the label-structure
recognition problem. There are algorithms that decide if given
labellings are $1$- or $2$-laminar, but recognizing $3$-layered ones
is $\np$-hard.
In the labeling-recognition problem we are given a set of candidates
$C$, a set of labels $L$, and a labeling function $\lambda$. Our goal
is to recognize if $\lambda$ is $t$-laminar (or $t$-layered), for a
given $t$.


\begin{prop}\label{prop:rec-2laminar}
For $t \in \{1, 2\}$ there exists a polynomial-time algorithm for deciding if a given labeling $\lambda$ is $t$-laminar.
The problem of deciding if a given labeling $\lambda$ is $3$-layered is $\np$-hard.
\end{prop}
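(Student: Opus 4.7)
The plan is to handle the three claims separately, using a common reduction to graph-coloring on a ``conflict graph'' whose vertices are labels.

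For $t=1$, one simply iterates over all pairs of labels $x,y \in L$ and checks whether the laminar condition holds, i.e., whether $C_x \cap C_y = \emptyset$, $C_x \subseteq C_y$, or $C_y \subseteq C_x$. This takes $O(|L|^2 \cdot |C|)$ time and the answer is ``yes'' exactly when every pair satisfies the condition.

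For $t=2$, the idea is to define the conflict graph $G_\lambda=(L,E_\lambda)$ where $\{x,y\} \in E_\lambda$ exactly when the pair $(x,y)$ violates the $1$-laminar condition. A partition $L=L_1 \cup L_2$ witnesses $2$-laminarity if and only if no conflict edge has both endpoints in the same $L_i$, i.e., $(L_1,L_2)$ is a proper $2$-coloring of $G_\lambda$. Thus $\lambda$ is $2$-laminar if and only if $G_\lambda$ is bipartite, which can be checked in linear time in the size of $G_\lambda$ via BFS. Constructing $G_\lambda$ takes $O(|L|^2 \cdot |C|)$ time, so the whole procedure is polynomial.

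For the hardness part, I would reduce from the classical $\np$-hard $3$-coloring problem. Given a graph $G=(V,E)$, construct an instance of the labeling-recognition problem by taking one label $\ell_v$ per vertex $v \in V$, and for each edge $e=\{u,v\} \in E$ a single candidate $c_e$ with $\lambda(c_e)=\{\ell_u,\ell_v\}$. Then $C_{\ell_u} \cap C_{\ell_v} \neq \emptyset$ if and only if $\{u,v\} \in E$, so restricting $\lambda$ to a set $L' \subseteq L$ is $1$-layered exactly when the corresponding vertex set is independent in $G$. Consequently, $\lambda$ is $3$-layered if and only if $V$ can be partitioned into three independent sets, i.e., $G$ admits a proper $3$-coloring. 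The reduction is clearly polynomial, which yields $\np$-hardness.

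The only slightly delicate point is making sure that in the $2$-laminar algorithm the conflict graph correctly captures \emph{pairwise} compatibility within a layer; this is immediate because $1$-laminarity is defined pairwise, so no higher-order interactions arise and bipartiteness of $G_\lambda$ is both necessary and sufficient. I expect this to be the main conceptual step, while the hardness reduction for $3$-layered is essentially a direct encoding of $3$-coloring.
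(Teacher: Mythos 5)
Your proposal is correct and follows essentially the same route as the paper: a pairwise check for $t=1$, a conflict-graph $2$-colorability test for $t=2$ (the paper encodes the same constraint as a 2SAT instance with clauses forcing conflicting labels into different layers, which is equivalent to your bipartiteness test), and for hardness a one-candidate-per-pair gadget so that layers correspond to independent sets (the paper phrases this as a reduction from partition into three cliques, i.e., $3$-coloring of the complement graph, creating a candidate per non-adjacent pair instead of per edge). The differences are purely presentational.
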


\begin{proof}
For $t=1$ we can simply check if for each two labels $\ell_1, \ell_2 \in L$ it holds that $C_{\ell_1} \setminus C_{\ell_2} = \emptyset$ or $C_{\ell_2} \setminus C_{\ell_1} = \emptyset$. For $t = 2$ we reduce the problem to 2-satisfiability (2SAT), which can be solved in polynomial time. Let us recall that a labeling $\lambda$ is $2$-laminar if $L$ can be represented as a disjoint union of $L_1$ and $L_2$ such that $\lambda|_{L_i}$ is $1$-laminar for each $i=1,2$.

For each label $\ell \in L$ we create one Boolean variable $x_{\ell}$. Intuitively, if $x_{\ell}= \mathrm{True}$ then $\ell \in L_1$, and if $x_{\ell}= \mathrm{False}$, then $\ell \in L_2$.
For each two labels $\ell_1, \ell_2 \in L$ such that $C_{\ell_1} \setminus C_{\ell_2} \neq \emptyset$ and $C_{\ell_2} \setminus C_{\ell_1} \neq \emptyset$ we include two clauses: $(x_{\ell_1} \vee x_{\ell_2})$ and  $(\neg x_{\ell_1} \vee \neg x_{\ell_2})$---these clauses ensure that $x_{\ell_1} \neq x_{\ell_2}$.

It is apparent that such constructed instance of 2SAT is satisfiable if and only if $\lambda$ is $2$-laminar.

For $t=3$ we give a reduction from the partition into 3 cliques problem, which is $\np$-hard \citep{gar-joh:b:int}. In this problem we are given a graph $G = (V, E)$ and we ask if it is possible to partition the set of vertices $V$ into three sets such that the graphs induced by them are cliques.

For each vertex $x \in V$ we introduce a vertex label $x$, and for each non-adjacent pair of vertices $\{x, y\} \notin E$ we introduce a candidate $c_{\{x, y\}}$ with labels corresponding to $x$ and $y$. First, we will show that if it is possible to partition the so-constructed graph into three cliques, $V_1$, $V_2$ and $V_3$, then the labeling is 3-layered with the layers corresponding to $V_1$, $V_2$ and $V_3$. For that we need to show that the constructed labeling is 1-layered when restricted to $V_i$ for $i \in [3]$. Towards a contradiction, assume this is not the case, i.e., that there exist two labels $x, y \in V_i$ such that $C_x \cap C_y \neq \emptyset$. Let $c \in C_x \cap C_y$. Given that we constructed $c$, we infer that $x$ and $y$ were not adjacent in $G$, a contradiction.    

On the other hand, suppose that the labeling is 3-layered, with layers $V_1$, $V_2$, $V_3$. Then $V_i$ for $i\in[3]$ is a clique. Indeed, if there is a pair of labels $x, y \in V_i$ with $\{x, y\} \notin E$, then  there is a candidate $c_{\{x, y\}} \in C_x \cap C_y$ and hence $C_x \cap C_y \neq \emptyset$, contradicting the fact that $V_i$ is $1$-layered. 
This completes the proof.
\end{proof}

\section{The Price of Diversity}

Typically committees that maximize $f$ are not $D$-diverse and, so, we
sometimes need to ``pay a price'' for diversity,\footnote{This notion
  of ``paying a price'' should not be taken literally. In many
  settings maintaining diversity leads to obtaining better societal
  outcomes because many aspects of elected committees are not captured
  by the objective function. Here we are concerned with the loss of
  the value of the objective function, but see it purely as a
  technical concept.} which can be expressed as the ratio between $f$
of the overall optimal committee and $f$ of the optimal $D$-diverse
committee.

\begin{definition}
Let $\calK=\{W\subseteq C: |C|=k\}$ and $\calK_D=\{W\in \calK: W\text{ is $D$-diverse}\}$.
The \emph{price of diversity} of $D$ subject to $f$ is defined as
\begin{align*}
\mathit{pod}(f,D) = \frac{\max_{W\in \calK} f(W)}{\max_{W \in \calK_D} f(W)}.
\end{align*}
\end{definition}

In general, the price of diversity can be unbounded, since the
diversity requirement can force a committee $W$ with minimum objective
value.  For specific diversity specifications, however, one can bound
the price of diversity.

\begin{theorem}\label{thm:pricebound}
Given a $\textsc{Balanced-Committee}$ instance and a submodular objective function $f$, the price of diversity is at most $2$.
\end{theorem}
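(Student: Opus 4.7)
The plan is, given an optimal unconstrained size-$k$ committee $W^*$, to construct a balanced committee $W$ with $f(W) \ge \tfrac{1}{2}f(W^*)$; this is enough, since then the optimal balanced committee $O^*$ satisfies $f(O^*) \ge f(W) \ge \tfrac{1}{2}f(W^*)$. Write $W^*_A := W^* \cap A$ and $W^*_B := W^* \cap B$, and set $p := |W^*_A|$ and $q := |W^*_B|$, so that $p + q = 2k'$. By swapping the roles of $A$ and $B$ if necessary we may assume $p \le k' \le q$. The construction will keep $W^*_A$ intact, pad the $A$-side with any $k' - p$ further candidates from $A \setminus W^*_A$ (which exist because $|A| \ge k'$), and select a carefully chosen $T \subseteq W^*_B$ with $|T| = k'$. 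By monotonicity of $f$, $f(W) \ge f(W^*_A \cup T) = f(W^*_A) + f(T \mid W^*_A)$, so the whole argument reduces to choosing $T$ so that the marginal $f(T \mid W^*_A)$ recovers a large share of $f(W^*_B \mid W^*_A) = f(W^*) - f(W^*_A)$.

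The key step is the following averaging lemma: there exists $T \subseteq W^*_B$ with $|T| = k'$ and
\[
f(T \mid W^*_A) \;\ge\; \tfrac{k'}{q}\bigl(f(W^*) - f(W^*_A)\bigr).
\]
To prove it, consider the marginal function $g(S) := f(S \mid W^*_A)$ on subsets of $W^*_B$, which is monotone submodular with $g(\emptyset) = 0$ and $g(W^*_B) = f(W^*) - f(W^*_A)$. Order $W^*_B = \{b_1, \dots, b_q\}$ greedily with respect to $g$, and set $\Delta_i := g(\{b_1, \dots, b_i\}) - g(\{b_1, \dots, b_{i-1}\})$. Submodularity of $g$ combined with the greedy choice of $b_i$ gives $\Delta_1 \ge \Delta_2 \ge \dots \ge \Delta_q$; since the $\Delta_i$ sum to $g(W^*_B)$, the first $k'$ of them contribute at least a $(k'/q)$-fraction of the total, and $T := \{b_1, \dots, b_{k'}\}$ satisfies $g(T) \ge (k'/q)\, g(W^*_B)$.

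Combining everything, and using $q \le 2k'$ (so $k'/q \ge 1/2$) together with $f(W^*_A) \ge 0$,
\begin{align*}
f(W) \;\ge\; f(W^*_A) + \tfrac{1}{2}\bigl(f(W^*) - f(W^*_A)\bigr) \;=\; \tfrac{f(W^*) + f(W^*_A)}{2} \;\ge\; \tfrac{1}{2}\, f(W^*).
\end{align*}
I expect the main subtlety to be the averaging lemma: it is essential to extract the $(k'/q)$-fraction from the \emph{marginal} $g$ rather than from $f$ itself, so that the $f(W^*_A)$ contribution is preserved in full while only the residual $f(W^*) - f(W^*_A)$ is discounted. A naive argument applying a random-subset bound to $f$ alone, or invoking the greedy $(1 - 1/e)$ guarantee of \Cref{thm:submodular}, yields only a constant of $4$ or $e/(e-1)$ respectively, whereas the non-increasing-marginals argument above combines cleanly with the bound $q \le 2k'$ to give precisely the ratio $2$.
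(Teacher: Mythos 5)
Your proof is correct, but it takes a genuinely different route from the paper's, and the difference is substantive. The paper's argument splits the unconstrained optimum $O^*$ in half: assuming WLOG $|O^*\cap A|\ge k'$, it picks the best $k'$-subset $W_1\subseteq O^*\cap A$, invokes subadditivity $f(O^*)\le f(W_1)+f(O^*\setminus W_1)$, asserts $f(W_1)\ge\tfrac12 f(O^*)$, and extends $W_1$ to any balanced committee. You instead keep the \emph{entire} smaller side $W^*_A$ and recover at least a $k'/q\ge\tfrac12$ fraction of the \emph{marginal} contribution $f(W^*_B\mid W^*_A)$ via your averaging lemma on the non-increasing greedy marginals of $g(\cdot)=f(\cdot\mid W^*_A)$. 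Your decomposition buys two things. First, a strictly stronger intermediate bound, $f(W)\ge\tfrac12\bigl(f(W^*)+f(W^*_A)\bigr)$. Second, and more importantly, it closes a gap that the paper's two-line argument actually has: the step ``$f(W_1)\ge\tfrac12 f(O^*)$'' requires $f(O^*\setminus W_1)\le f(W_1)$, which does not follow from submodularity and monotonicity alone --- if $|O^*\cap A|=|O^*\cap B|=k'$ and nearly all the value of $O^*$ sits in $B$, the discarded half $O^*\setminus W_1\supseteq O^*\cap B$ is worth far more than $W_1$, so the inequality fails as stated (the theorem still holds, but that proof step does not). Your version is insensitive to which side is relabelled as $A$ precisely because you never discard the larger side wholesale; the averaging lemma is doing real work here, not just bookkeeping. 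The only point worth making explicit is that $|A|\ge k'$ (needed for the padding step) is guaranteed by the assumption that a balanced committee exists at all, without which the price of diversity is undefined.
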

\begin{proof}
Let $O^*=\argmax_{W\in \calK} f(W)$. Without loss of generality assume that $|O^*\cap A|\geq k^{\prime}$.
Choose $W_1 \subseteq O^*\cap A$ such that $|W_1|=k^{\prime}$ and $f(W_1)$ is maximal.
Since $f$ is submodular, we have $f(O^*)\leq f(W_1)+f(O^*\setminus W_1)$ and, by monotonicity, $f(W_1)\geq \frac{1}{2}\cdot f(O^*)$.
We conclude that for any committee $W$ that contains $W_1$ we have $f(W)\geq \frac{1}{2}\cdot f(O^*)$.
Since there are $D$-diverse committees that contain $W_1$, we have $\mathit{pod}(f,D)\leq 2$.
\end{proof}

In general, we can expect a bounded price of diversity only if either
the diversity requirements are not very restrictive (as it is the case
with balanced committees) or if preferences are to some degree aligned
with the constraints.  A more detailed study of the price of diversity
would be a very interesting research direction.

\section{Related Work}

Our work touches upon many concepts and, thus, is related to many
pieces of research. In this section we briefly mention some of the
most relevant ones.

\citet{conf/aaai/LangS16} considered a model of
diversity requirements that closely resembles our interval
constraints. There are two main differences between their work and
ours:
\begin{inparaenum}[(i)]
\item they do not consider objective functions and
\item their input consists of ``ideal points'' instead of intervals for each label; since there might not exist a committee satisfying such ``exact'' constraints, they focus on finding committees minimizing a certain distance to the ideal diversity distributions.   
\end{inparaenum}

If the labels denote party affiliations of the candidates, the
diversity constraints are one-layered and form instances for the
apportionment problem, where seats in the parliament should be
distributed among the parties (see the book of \citet{BaYo82a} for an overview of the apportionment
problems). Bi-apportionment~\citep{balinski:halshs-00585327,CIS-88648}
can be viewed as an extension of the traditional apportionment to the
case when the diversity constraints are two-layered. However, in all these settings there is no
objective functions, and the goal is only to find a committee
satisfying certain label-based constraints.  For this reason our paper
is even closer the work of \citet{brams1990constrained}, who
introduces a specific method based on approval voting that takes diversity constraints into account, which are expressed as quotas for each possible
\emph{tuple} of labels; \citet{potthoff90} and~\citet{straszak1993computer} formulated an ILP for
this method.

Optimization of a given objective functions due to constraints is a
classic problem studied extensively in the literature. For a review of
this literature we refer the reader to the book of \citet{Korte2006}. More specifically, \citet{submodularOverview} provide a comprehensive survey
for the case when the optimized function is submodular. For submodular
functions different types of general constraints are considered,
including matroid and knapsack constraints~\citep{ChaVonZen2014}. A
particularly related case is when the constraints are given for the
size of the committee (see e.g., the works of \citet{QiaShiYuTan17} and the references
inside)---interestingly, this case can be represented in our model,
when we assume that there is a single label assigned to each
candidate, and the constraints are given for the number of occurrences
of this label in the elected committee.
Candidates having positive synergies may induce supermodular (instead of submodular) objective functions. We note that constrained maximization of a supermodular function is equivalent to constrained  minimization of a submodular function, known to be NP-hard~\citep{iwata2009submodular}.

Our model is related to the Multidimensional Knapsack problem with submodular objective functions~\citep{Fre04,Sviridenko04,LeeMNS2009,FloMD10,PucRP10,KulST13}, but differs in a few important aspects. The two biggest differences are:
(i) Multidimensional Knapsack has constraints of the form ``no more than value $D$ on dimension $i$'' (dimensions correspond to labels in our work), whereas our constraints can have more structure (specific quantities of a given label, or upper and lower bounds),
(ii) Multidimensional Knapsack has items that can contribute more than a unit weight to a particular dimension, whereas our candidates only have 0/1 contributions.
Thus, our problem is more general regarding the constraint specification, but less general regarding the structure of the weights of items.

The complexity of selecting an optimal committee without constraints
has been studied extensively. For a general overview of this
literature, we point the reader to a chapter by
\citet{FSST-trends}. Perhaps the most attention
was dedicated to the study of the Chamberlin--Courant
rule~\citep{ccElection}. For instance, it is known that this rule
is $\np$-hard to
compute~\citep{pro-ros-zoh:j:proportional-representation}. The problem
of finding a winning Chamberlin--Courant committee under restricted
domains of voters' preferences was further studied by
\citet{fullyProportionalRepr},
\citet{YCE13a}, \citet{ElkLac15a},
\citet{sko-yu-fal-elk:j:sc-cc}, and \citet{PetLac17a}. Parametrized complexity of the problem
was studied by \citet{fullyProportionalRepr} and
its approximability by \citet{budgetSocialChoice},
\citet{SFS15} and \citet{sko-fal:c:maxcover}.

Finally, we note that \citet{celis2017group} very recently and independently introduced a model for diversity constraints (in their paper refered to as \emph{fairness constraints}) that is similar to our model.
Their paper contains algorithmic results, which are also applicable in our setting.

\begin{table}
	\setlength{\tabcolsep}{14pt}
	\centering
	\begin{tabular}{llccc}
		\toprule
   	     & Label     & Interval  & Independent \\
	Rule & Structure & Constraints & Constraints \\
\midrule
separable &1-laminar   &  P &  P \\
&2-laminar  & P &  NP-hard \\
&3-layer  & NP-hard &  NP-hard \\
&few labels  & FPT & W[1]-hard  \\
          	\midrule
submodular & 1-laminar & $0.5$-approx. &  ?  \\
			& balanced & $0.63$-approx. & --- \\ 
\midrule
CC & balanced &  PTAS &  ---  \\
		\bottomrule
	\end{tabular}
	\vspace{3pt}
	\caption{The complexity of computing winning committees for rules 
          of a given type, for the case of candidates with particular 
          label structures, and particular diversity specifications.
          The complexity results for the problem of testing if a feasible
          committee exists are the same as those for computing winning committees. ``Balanced'' label structure refers to the problem
          of computing balanced committees (thus the case of independent constraints is not defined for this setting).}
	\vspace{-5pt}
	\label{table}
\end{table}

\section{Conclusion}
\label{sec:Conclusion}

We studied the problem of selecting a committee of a given size that,
on the one hand, would be diverse (according to a given diversity
specification) and, on the other hand, would obtain as high an
objective value as possible.
We present our results in \Cref{table}. We find that in general
our problem is computationally hard, but there are many tractable
special cases, especially for separable objective functions (which are
very useful for shortlisting tasks, where diversity constraints are
particularly relevant) and for up to $2$-laminar label structures
(which means that dealing with two sets of independent, hierarchically
arranged labels, is feasible). Our work leads to many open
problems. In particular, we barely scratched the surface regarding
approximation of our problems, or their parametrized
complexity. Experimental studies would be very desirable as well.

\subsubsection*{Acknowledgments}
Robert Bredereck was from September 2016 to September 2017 on postdoctoral leave at the University of Oxford, supported by the DFG fellowship BR 5207/2.
Piotr Faliszewski was supported by AGH grant 11.11.230.337 (statutory research).
Ayumi Igarashi was supported by the Oxford Kobe Scholarship. 
Martin Lackner was supported by the European Research Council (ERC) under grant number 639945 (ACCORD) and by the Austrian Science Foundation FWF, grant P25518 and Y698.
Piotr Skowron was supported by a Humboldt Research Fellowship for Postdoctoral Researchers.

\bibliographystyle{aaai}

\end{document}